\documentclass[11pt]{article}
\usepackage{lmodern}
\usepackage{lmodern}
\usepackage[T1]{fontenc}
\usepackage[utf8]{inputenc}
\usepackage{geometry}
\usepackage{color}
\usepackage{mathtools}
\usepackage{amsmath}
\usepackage{amsthm}
\usepackage{amssymb}
\usepackage{graphicx}
\usepackage[bookmarks=false,
 breaklinks=false,pdfborder={0 0 1},backref=section,colorlinks=true]
 {hyperref}
\hypersetup{
 linkcolor=blue,citecolor=blue,urlcolor=blue}

\makeatletter
\theoremstyle{plain}
\newtheorem{thm}{\protect\theoremname}
\theoremstyle{definition}
\newtheorem{defn}[thm]{\protect\definitionname}
\theoremstyle{remark}
\newtheorem{rem}[thm]{\protect\remarkname}
\theoremstyle{plain}
\newtheorem{cor}[thm]{\protect\corollaryname}
\theoremstyle{definition}
\newtheorem{example}[thm]{\protect\examplename}
\theoremstyle{plain}
\newtheorem{prop}[thm]{\protect\propositionname}

\@ifundefined{date}{}{\date{}}

\usepackage{amsthm}\usepackage{booktabs}
\usepackage{array}
\usepackage{enumitem}

\newtheorem{theorem}{Theorem}[section]

\newtheorem{assumption}[theorem]{Assumption}

\newcommand{\Risk}{\mathrm{Risk}}

\title{The Epistemic Risk of Risk:\\ A Modal Framework for Quantitative Risk Management}
\author{Hirbod Assa\\University College Dublin}

\makeatother

\providecommand{\corollaryname}{Corollary}
\providecommand{\definitionname}{Definition}
\providecommand{\examplename}{Example}
\providecommand{\propositionname}{Proposition}
\providecommand{\remarkname}{Remark}
\providecommand{\theoremname}{Theorem}

\begin{document}
\maketitle
\begin{abstract}
Risk governance is a multilayer practice. Standard QRM focuses on
identifying, modelling, and measuring adverse states of the world.
Yet institutional risk governance also asks when an institution is
entitled to rely on a risk claim. This paper introduces modal epistemic
machinery to represent that entitlement. For a risk proposition $p$,
$Kp$ denotes assurance-grade knowledge, suitable for certification,
audit reliance, board sign-off, or regulatory reporting. By contrast,
$Bp$ denotes working commitment: a disciplined action-guiding stance
available under incomplete assurance. Following standard modal notation,
$K$ stands for knowledge and $B$ for belief; in this paper they
are interpreted institutionally.

The paper proceeds in five steps. First, it develops an institutional
modal language for QRM, interpreting $K$ as assurance-grade endorsement
and $B$ as working commitment. It studies the governance meaning
of factivity, consistency, closure, introspection, and reachability,
and compares the weaker packages used in the paper with standard S5
and KD45 systems. Probability remains indispensable in QRM, but probability
alone does not represent the institutional standing of a risk claim.

Second, the paper introduces epistemic-risk diagnostics. The central
diagnostics are 
\[
p\wedge\neg Kp\qquad\text{and}\qquad p\wedge\neg Bp.
\]
They identify cases in which a risk is real but lacks the relevant
institutional stance. Thus, alongside modelling object-level risks,
QRM should also model epistemic and governance conditions such as
evidential incompleteness, model risk, implementation uncertainty,
validation gaps, and failures of escalation.

Third, the framework is developed in both crisp and fuzzy modal semantics.
The fuzzy setting represents propositions by degrees and evidential
accessibility by fuzzy relations. This captures graded support, live
possibility, non-exclusion, structural uncertainty, epistemic inconsistency,
and hesitation.

Fourth, the paper studies two natural governance principles. The Risk
Management Principle says that if $p$ is a real risk, then the absence
of the relevant stance, 
\[
p\wedge\neg Mp,\qquad M\in\{K,B\},
\]
is itself risk-relevant. The Risk Reach Principle says that real and
decision-relevant risks should be reachable by the appropriate institutional
stance. The limitation is Moorean: if diagnostics such as $p\wedge\neg Kp$
and $p\wedge\neg Bp$ are treated as ordinary targets of the same
stance whose absence they record, they generate Moorean and Fitch-style
collapse pressure. In the fuzzy setting, this pressure is localised:
under factive knowledge-like standards, true-but-unknown risk is bounded
by reachable structural uncertainty or conflict; under non-factive
belief-like standards, true-but-unbelieved risk is bounded by reachable
epistemic inconsistency.

Finally the paper's proposal to address the above mentioned paradox
is architectural. Object-level risk claims should be separated from
meta-level epistemic diagnostics. The latter should be governed through
an audit layer that records and controls epistemic gaps, rather than
being forced into the same endorsement or commitment register whose
absence they diagnose. This preserves justified action and precaution
without collapsing into an ideal of institutional omniscience. \footnote{The project arose from a critical engagement with the treatment of
quantitative risk in \cite{EmbrechtsHofertChavezDemoulin2024RiskRevealed},
and develops the modal-epistemic perspective that the present paper
argues is missing from that account. }
\end{abstract}
\noindent\textbf{Keywords.} Epistemic risk; quantitative risk management;
modal epistemic logic; institutional knowledge; working commitment;
Moorean diagnostics; Fitch-style collapse; fuzzy modal semantics;
model risk; probability; meta-level controls. 

\section{Introduction}

Risk management is often described as decision-making under uncertainty.
This is correct, but incomplete. Risk governance also concerns institutional
reliance: which risk claims have sufficient authority to guide action.

This second question cannot be reduced to the first. A proposition
may be true without being institutionally recognised. Conversely,
an institution may endorse a true claim in a fragile way: the conclusion
is correct in the actual case, but nearby changes in data, model class,
scenario design, implementation, or evidential access would have produced
false reassurance.

This paper introduces modal epistemic machinery to represent this
second-order layer of risk governance.\footnote{The aim is not to replace probabilistic QRM. Probability remains central
for modelling likelihood, tail loss, exposure, capital impact, and
expected loss. The point is that probability alone does not represent
the institutional standing of a risk claim. A claim may be probable
but not robust, live but not endorsed, endorsed but unsafe, or excluded
under one institutional standard while remaining live under another.} For a risk proposition $p$, the formula $Kp$ means that the institution
has assurance-grade knowledge of $p$. This is the status required
for certification, board sign-off, regulatory reporting, audit reliance,
or formal institutional use. By contrast, $Bp$ means that the institution
has a working commitment to $p$: a disciplined action-guiding stance
available when assurance is incomplete. Such a commitment may justify
monitoring, stress testing, precautionary action, or temporary model
restrictions.

The distinction between $K$ and $B$ is central. Risk management
cannot always wait for assurance-grade knowledge. Severe risks may
require action while evidence is incomplete, models are contested,
implementation is uncertain, or validation is ongoing. Treating every
action-guiding stance as knowledge would over-certify the institution's
position; refusing to act before knowledge is available would understate
the demands of precaution.

The main claim is that QRM faces an epistemic risk of risk.\footnote{In modal logic, matters concerning belief are often called doxastic,
while matters concerning knowledge are called epistemic. In this paper,
epistemic is used in a broader institutional sense to cover both knowledge-like
and belief-like stances, unless a distinction is explicitly needed.} The epistemic risk of risk is the risk that an institution fails
to make a real risk visible, credible, or actionable through its evidential
processes, implementation standards, modelling choices, validation
procedures, or escalation channels. A real risk may fail to enter
either the assurance-grade procedure or the working-commitment procedure.
In that case, the institution is exposed both to the object-level
risk and to a failure in its relation to that risk.

The modal language makes these distinctions explicit: 
\[
p,\qquad Kp,\qquad Bp,\qquad\bar{K}p,\qquad\bar{B}p,\qquad\Diamond_{K}p,\qquad\Diamond_{B}p,\qquad p\wedge\neg Kp,\qquad p\wedge\neg Bp.
\]
Here $p$ is the object-level risk claim, such as ``the model underestimates
tail loss,'' ``the banking network is vulnerable to contagion,''
or ``the flood zone is critical.'' The formula $Kp$ says that $p$
has assurance-grade endorsement, while $Bp$ says that $p$ has working-commitment
status. The formulas 
\[
\bar{M}p:=\neg M\neg p,\qquad M\in\{K,B\},
\]
express non-exclusion: under standard $M$, the institution has not
ruled out $p$. The possibility formulas $\Diamond_{M}p$ express
live reachability: $p$ remains live somewhere in the relevant $M$-evidence
field. Thus $\Diamond_{K}p$ says that $p$ remains reachable under
the assurance standard, while $\Diamond_{B}p$ says that $p$ remains
live under the working-commitment standard. In many crisp settings,
$\Diamond_{M}p$ and $\bar{M}p$ coincide; in fuzzy settings, they
may differ depending on the chosen modal operations. 

The Moorean diagnostics record epistemic gaps: 
\[
p\wedge\neg Kp\qquad\text{and}\qquad p\wedge\neg Bp.
\]
The first says that the risk is real but lacks assurance-grade endorsement.
The second says that the risk is real but has not been adopted as
a working commitment. These formulas express audit-relevant governance
failures: a risk is present, but the institution lacks the relevant
stance toward it.

The paper contributes to the literature on quantitative risk in several
ways. First, it develops an institutional modal language for QRM.
In this language, $Kp$ is interpreted as assurance-grade endorsement
of a risk claim, while $Bp$ is interpreted as working commitment
under incomplete assurance. The framework also clarifies the governance
meaning of standard modal assumptions such as factivity, consistency,
closure, introspection, and reachability, and compares the weaker
packages used here with standard S5 and KD45 systems.

Second, it defines epistemic-risk diagnostics such as 
\[
p\wedge\neg Kp\qquad\text{and}\qquad p\wedge\neg Bp.
\]
These diagnostics identify cases in which a risk is real but lacks
the relevant institutional stance. They also show why QRM should model
not only hazards and losses, but also epistemic and governance conditions
such as evidential incompleteness, model uncertainty, implementation
uncertainty, validation gaps, and failures of recognition or escalation.

Third, the framework is developed in both crisp and fuzzy modal semantics.
The crisp setting recovers standard Kripke-style necessity and possibility.
The fuzzy setting allows propositions and evidence relations to hold
by degree, capturing graded support, live possibility, non-exclusion,
structural uncertainty, epistemic inconsistency, and hesitation.

Fourth, the paper introduces two natural principles of risk governance.
The Risk Management Principle says that real risks without the relevant
stance are themselves risk-relevant: 
\[
p\in\mathrm{Risk}(S)\quad\Rightarrow\quad p\wedge\neg Mp\in\mathrm{Risk}(S),\qquad M\in\{K,B\}.
\]
The Risk Reach Principle says that real and decision-relevant risks
should be reachable by the appropriate institutional stance: 
\[
p\in\mathrm{Risk}(S)\quad\Rightarrow\quad p\to\Diamond_{M}Mp.
\]
Here $\mathrm{Risk}(S)$ is the set of risk propositions relevant
to subject $S$. Together, these principles express the ambition that
real risks should not remain permanently outside the reach of evidential
processes, implementation, modelling, validation, endorsement, or
disciplined commitment. Their unrestricted combination, however, creates
Moorean and Fitch-style collapse pressure.

Fifth, motivated by these collapse results, the paper proposes a typed
architecture. Object-level risk claims are distinguished from meta-level
epistemic diagnostics. Object-level claims concern losses and adverse
states of the world. Meta-level diagnostics concern the institution's
relation to those claims: whether they are known, believed, live,
excluded, validated, escalated, owned, or left unresolved. This architecture
preserves the ambition that real risks should be reachable, while
avoiding the implausible conclusion that every real risk is already
known, believed, or institutionally endorsed.

The formal analysis is then connected back to applications. Model
risk illustrates the difference between a model output and assurance-grade
robustness under admissible model variation. Banking-network contagion
illustrates why working commitment need not be factive: an institution
may act on a severe stress scenario before the adverse state is actual.
Flood governance illustrates the geometry of endorsement, live possibility,
Moorean fragility, and hesitation. The paper also compares modal status
with probabilistic alternatives. Probability remains indispensable,
but it does not by itself encode the evidential geometry of endorsement,
non-exclusion, robustness, or institutional standing.

The paper is organised as follows. Section~\ref{sec:related} situates
the argument in the relevant literatures. Section~\ref{sec:setup}
develops the institutional epistemic setup. Section~\ref{sec:moorean}
states the Moorean and Fitch-style collapse pressure in crisp and
fuzzy form. Section~\ref{sec:examples} presents the examples motivated
by real applications. Section~\ref{sec:Probability-and-modal} compares
modal status with probabilistic alternatives. Section~\ref{sec:meta}
develops the proposed meta-level governance architecture. Section~\ref{sec:conclusion}
concludes. Proofs are collected in Appendix~\ref{app:proofs}. 

\section{Related Literature}

\label{sec:related}

This paper brings modal epistemic tools into QRM. Its background lies
in five literatures: anti-luck and anti-risk epistemology; Moorean
and Fitch-style epistemic logic; interactive epistemology and game-theoretic
knowledge; fuzzy modal methods for graded support and approximation;
and probabilistic or decision-theoretic treatments of epistemic risk.

\subsection{Epistemic risk, safety, and institutional entitlement}

Anti-luck epistemology holds that a true belief fails to amount to
knowledge when its truth is too lucky \cite{Pritchard2005EpistemicLuck,Pritchard2007AntiLuck,Pritchard2012AntiLuckVirtue}.
Safety accounts sharpen this thought: knowledge requires that the
relevant cognitive success could not easily have been a failure \cite{Sainsbury1997Easy,Sosa1999HowToDefeat,Williamson2000Knowledge}.
Pritchard's anti-risk epistemology gives the point its most direct
form: the relevant defect is exposure to nearby error \cite{Pritchard2015Risk,Pritchard2016EpistemicRisk}.

Institutional risk governance has direct analogues of these failures.
Pritchard's distinction between probability and modal closeness is
important here: a low-probability event may be modally close if only
a small perturbation is required for it to occur, while a non-negligible
probability may correspond to a structurally remote event \cite{Pritchard2016EpistemicRisk,Hawthorne2003Knowledge}.
Classic cases of epistemic luck similarly have institutional counterparts:
a risk process may reach the right conclusion by an unsafe route,
or operate in an environment where nearby cases would be misclassified
\cite{Chisholm1977Theory,Goldman1976Discrimination}.

\subsection{Moorean phenomena and knowability}

Moorean sentences such as ``$p$, but I do not believe $p$'' and
``$p$, but you do not know $p$'' have long been treated as pragmatically
or epistemically unstable \cite{Moore1942Reply,Hintikka1962Knowledge}.
In dynamic epistemic logic, they are analysed through learning and
public announcement: Moorean formulas may be unsuccessful or self-refuting
because learning them removes the ignorance or non-belief that made
them true \cite{Plaza1989Public,Gerbrandy1999Bisimulations,vanDitmarschHoekKooi2008Dynamic}.

Holliday and Icard show that Moorean phenomena are structural in important
epistemic and doxastic systems: in S5-like knowledge settings they
generate self-refutation, and in KD45-like belief settings they generate
unsuccessfulness \cite{HollidayIcard2010Moorean}. Fitch's knowability
paradox shows that unrestricted principles linking truth to knowability
can collapse into the conclusion that all truths are known \cite{Fitch1963Logical}.
The present paper imports this pressure into QRM. The formulas $p\wedge\neg Kp$
and $p\wedge\neg Bp$ are useful audit diagnostics, but they become
unstable if the same institution is required to know or believe them
as ordinary object-level outputs.

\subsection{Interactive epistemology and institutional fragmentation}

Interactive epistemology and game theory provide tools for modelling
knowledge, belief, common knowledge, and hierarchies of belief. Battigalli
and Bonanno give a state-space treatment of belief and knowledge in
which possibility correspondences encode what agents consider possible
and what they believe \cite{BattigalliBonanno1999Recent}. In relational
semantics, seriality corresponds to consistency of belief, transitivity
to positive introspection, and Euclideanness to negative introspection;
adding reflexivity gives the S5-style structure associated with knowledge
\cite{Chellas1980Modal,FaginHalpernMosesVardi1995Reasoning,Geanakoplos1994Common}.

This matters because institutions are fragmented epistemic systems.
 Many governance failures arise because one unit assumes that another
knows, owns, or has escalated a risk. The literature on common knowledge
and game-theoretic epistemology is therefore directly relevant \cite{Aumann1976Agreeing,Aumann1987Correlated,Aumann1995Backward,AumannBrandenburger1995Epistemic,Harsanyi1967Games,BrandenburgerDekel1993Hierarchies}.
Epistemic assumptions also underlie rationalizability, correlated
equilibrium, forward induction, and backward induction \cite{Bernheim1984Rationalizable,Pearce1984Rationalizable,BrandenburgerDekel1987Rationalizability,DekelGul1997Rationality,BattigalliSiniscalchi1998Hierarchies}.
The present paper uses these ideas institutionally rather than strategically:
the question is what a risk institution is entitled to know, believe,
treat as live, or regard as ruled out.

\subsection{Fuzzy modal methods}

Classical modal logic uses crisp accessibility relations and crisp
propositions. QRM often requires a graded version. Evidence may be
partial, vague, asymmetric, or uncertain; a risk may be strongly live,
weakly excluded, partially supported, or robust only under restricted
assumptions. Fuzzy modal logic generalises Kripke semantics by replacing
crisp relations with fuzzy relations and evaluating necessity, possibility,
sufficiency, and dual sufficiency by degree \cite{Zadeh1965FuzzySets,Zadeh1972Linguistic,Hajek1998Metamathematics,Radzikowska2017FuzzyModal}.

The construction also connects with rough-set lower and upper approximations
\cite{Pawlak1982RoughSets,Pawlak1991RoughSets}. The lower approximation
corresponds to what the institution can positively endorse; the upper
approximation corresponds to what it cannot responsibly exclude. The
gap between them is a hesitation region, related to intuitionistic
fuzzy sets and distance-based accounts of hesitation \cite{Atanassov1986Intuitionistic,Atanassov1999Intuitionistic,SzmidtKacprzyk2000Distances}.
The algebraic flexibility of t-norms and fuzzy implications allows
different standards of support to be modelled \cite{KlementMesiarPap2000Triangular,BaczynskiJayaram2008Fuzzy}.
Linguistic hedges such as material, plausible, remote, reasonably
foreseeable, not ruled out, and sufficiently controlled behave like
modal governance predicates rather than merely like probabilities
\cite{DeCockRadzikowskaKerre2002FuzzyRough,DeCockKerre2004Fuzzy}.

\subsection{Probabilistic and decision-theoretic approaches}

A natural alternative is to treat epistemic risk probabilistically.
Babic develops an alethic theory of epistemic risk in which the riskiness
of a credence function measures sensitivity to graded error and connects
to expected inaccuracy and information entropy \cite{Babic2019TheoryEpistemicRisk}.
Pettigrew develops a decision-theoretic account in which rational
agents may permissibly differ in their attitudes to epistemic risk,
encoded by epistemic decision rules \cite{Pettigrew2022EpistemicRisk}.

The present paper studies a different object. Its target is not credence
alone, nor the decision rule by which an individual selects rational
credences, but institutional epistemic status under assurance, working
commitment, non-exclusion, and audit. These statuses require modal
structure: factivity for knowledge-like endorsement, consistency for
belief-like commitment, closure behaviour, introspection, and reachability
across admissible evidence states. A probabilistic threshold account,
\[
Bp\quad\text{iff}\quad\mathbb{P}(p)\ge t,
\]
does not generally supply these structures and is vulnerable to lottery-style
and preface-style failures \cite{Kyburg1961Probability,Makinson1965Preface,Foley1992WorkingWithoutNet,Christensen2004PuttingLogic}.
Probability remains indispensable, but it does not by itself represent
the institutional accessibility and endorsement structure studied
here.

\section{The Epistemic Setup for QRM}

\label{sec:setup}

This section introduces the modal framework. The aim is to distinguish
object-level risk claims from the institutional status of those claims.

\subsection{Graded propositions and evidence relations}

Let $W\neq\varnothing$ be a set of possible worlds, and let 
\[
\mathcal{L}\subseteq[0,1]^{W}
\]
be a collection of graded propositions. Each $p\in\mathcal{L}$ is
a function 
\[
p:W\to[0,1],
\]
where $p(w)$ is the degree to which $p$ holds at $w$. The crisp
case is recovered when $p(w)\in\{0,1\}$. Thus a risk claim may hold
fully, fail fully, or hold to an intermediate degree.

Equip $\mathcal{L}$ with the pointwise order: 
\[
p\le q\quad:\Longleftrightarrow\quad\forall w\in W,\;p(w)\le q(w).
\]
Thus $p\le q$ means that $p$ is nowhere stronger than $q$. Assume
that $\mathcal{L}$ is closed under conjunction $\wedge$ and negation
$\neg$. In the standard case, 
\[
(p\wedge q)(w)=\min\{p(w),q(w)\}.
\]
Thus conjunction records the degree to which both claims hold at the
same world.
\begin{defn}
[Structural uncertainty] For $p\in\mathcal{L}$, define 
\[
U(p):=p\wedge\neg p.
\]
Define global structural uncertainty by 
\[
U(w):=\sup_{p\in\mathcal{L}}U(p)(w).
\]
\end{defn}

This measures the degree to which a proposition overlaps with its
negation; globally, it records the largest such overlap at $w$.

Fix an epistemic standard 
\[
M:\mathcal{L}\to\mathcal{L}.
\]
Let 
\[
\gamma_{M}:W\times W\to[0,1]
\]
be a fuzzy evidence relation associated with $M$. The value $\gamma_{M}(w,v)$
is the degree to which $v$ is evidentially relevant from $w$ under
standard $M$. Thus $\gamma_{M}$ specifies which alternatives matter,
and to what degree, when applying the institutional standard $M$.

Let $\otimes$ be a monotone combination rule, typically a t-norm,
and let $\Rightarrow$ be an implication operation. Define 
\[
(Mp)(w):=\inf_{v\in W}\bigl(\gamma_{M}(w,v)\Rightarrow p(v)\bigr),
\]
and 
\[
(\Diamond_{M}p)(w):=\sup_{v\in W}\bigl(\gamma_{M}(w,v)\otimes p(v)\bigr).
\]
Thus $Mp$ is the degree of support across the evidence field, while
$\Diamond_{M}p$ is the degree to which $p$ remains live. In governance
terms, $Mp$ represents endorsement or support, while $\Diamond_{M}p$
represents live possibility or non-exclusion.

Define the dual non-exclusion operator by 
\[
\overline{M}p:=\neg M\neg p.
\]
Thus $\overline{M}p$ says that $p$ has not been ruled out under
standard $M$.

The hesitation margin is 
\[
H_{M}(p):=\max\{0,\overline{M}p-Mp\}.
\]
Thus $H_{M}(p)$ measures the gap between what the institution cannot
rule out and what it can positively endorse. 
\begin{rem}
[Possibility and dual non-exclusion] Both $\Diamond_{M}p$ and $\overline{M}p:=\neg M\neg p$
express non-exclusion. In many standard modal settings they coincide.
In particular, in crisp Kripke semantics with Boolean negation, 
\[
\Diamond_{M}p\equiv\neg M\neg p.
\]
Thus $p$ is possible under $M$ exactly when $p$ is not ruled out
by $M$.

In the fuzzy setting, the two notions may also coincide when the operations
$\otimes$, $\Rightarrow$, and $\neg$ form a suitable dual modal
pair. For example, under standard choices of fuzzy negation and residuated
implication, the possibility operator can be defined as the dual of
necessity: 
\[
\Diamond_{M}p:=\neg M\neg p.
\]
However, the paper keeps the notation $\Diamond_{M}p$ and $\overline{M}p$
separate because some fuzzy packages define possibility directly by
\[
(\Diamond_{M}p)(w)=\sup_{v\in W}\bigl(\gamma_{M}(w,v)\otimes p(v)\bigr),
\]
while defining non-exclusion by 
\[
\overline{M}p:=\neg M\neg p.
\]
In such cases the two operators may differ. The distinction is useful:
$\Diamond_{M}p$ records live evidential reachability, while $\overline{M}p$
records failure to rule out $p$ through the dual of the support operator. 
\end{rem}

\subsection{Special cases}

\subsubsection*{Crisp evidence relations}

Let $W\neq\varnothing$ be a set of worlds. A crisp proposition is
a subset $T(p)\subseteq W$. Let $R_{M}\subseteq W\times W$ be an
accessibility relation, and write 
\[
\Gamma_{M}(w):=\{v\in W:wR_{M}v\}.
\]
Then 
\[
w\models Mp\Longleftrightarrow\Gamma_{M}(w)\subseteq T(p),
\]
and 
\[
w\models\Diamond_{M}p\Longleftrightarrow\Gamma_{M}(w)\cap T(p)\neq\varnothing.
\]
Thus $Mp$ means that $p$ holds throughout the evidence set, while
$\Diamond_{M}p$ means that $p$ holds somewhere in it.

Equivalently, if $p$ is represented by its indicator function $p:W\to\{0,1\}$,
then 
\[
(Mp)(w)=\bigwedge_{v\in\Gamma_{M}(w)}p(v),\qquad(\Diamond_{M}p)(w)=\bigvee_{v\in\Gamma_{M}(w)}p(v).
\]
Hence necessity is conjunction over the evidence set, while possibility
is disjunction over the evidence set.

This reduction is not specific to the G\"odel/min package. It follows
in the crisp case for any standard residuated fuzzy package satisfying
the boundary conditions 
\[
1\Rightarrow a=a,\qquad0\Rightarrow a=1,
\]
and 
\[
1\otimes a=a,\qquad0\otimes a=0.
\]
Indeed, if $\gamma_{M}(w,v)\in\{0,1\}$, then worlds outside $\Gamma_{M}(w)$
contribute $1$ to the infimum defining $Mp$, and worlds inside $\Gamma_{M}(w)$
contribute $p(v)$. Thus 
\[
(Mp)(w)=\inf_{v\in W}\bigl(\gamma_{M}(w,v)\Rightarrow p(v)\bigr)=\bigwedge_{v\in\Gamma_{M}(w)}p(v).
\]
Similarly, worlds outside $\Gamma_{M}(w)$ contribute $0$ to the
supremum defining $\Diamond_{M}p$, and worlds inside $\Gamma_{M}(w)$
contribute $p(v)$. Hence 
\[
(\Diamond_{M}p)(w)=\sup_{v\in W}\bigl(\gamma_{M}(w,v)\otimes p(v)\bigr)=\bigvee_{v\in\Gamma_{M}(w)}p(v).
\]
With Boolean negation, the dual non-exclusion operator also coincides
with possibility: 
\[
\overline{M}p:=\neg M\neg p=\Diamond_{M}p.
\]

\subsubsection*{Metric evidence relations}

Let $(W,d)$ be a metric space and fix $\beta>0$. Define 
\[
wRv\Longleftrightarrow d(w,v)\le\beta.
\]
Then 
\[
\Gamma(w)=B_{\beta}(w).
\]
Hence 
\[
w\models Mp\Longleftrightarrow p(v)=1\text{ for all }v\in B_{\beta}(w),
\]
and 
\[
w\models\Diamond_{M}p\Longleftrightarrow B_{\beta}(w)\cap T(p)\neq\varnothing.
\]
For $W=\mathbb{R}^{n}$, one may use 
\[
d_{a}(w,v)^{2}:=\sum_{i=1}^{n}a_{i}|x_{i}-y_{i}|^{2},\qquad a_{i}>0.
\]
The evidence set is then an ellipsoid around $w$.

\subsubsection*{Information cells and S5 knowledge}

If $R_{M}$ is an equivalence relation, then $\Gamma_{M}(w)$ is the
information cell of $w$. In this case, 
\[
w\models Kp\Longleftrightarrow\Gamma_{K}(w)\subseteq T(p).
\]
This is the standard S5 interpretation of knowledge \cite{Hintikka1962Knowledge,FaginHalpernMosesVardi1995Reasoning,MeyerVanDerHoek1995Epistemic,BlackburnDeRijkeVenema2001Modal}.

The representation is useful when risk information is partitioned
into reporting states, supervisory categories, or formally recognised
information sets.

\subsubsection*{Preorders and belief-like refinement}

A preorder can represent evidential refinement or informational strength.
A belief-like stance is then 
\[
w\models Bp\Longleftrightarrow\Gamma_{B}(w)\subseteq T(p).
\]
Unlike knowledge, this stance need not be factive. The actual world
need not belong to $\Gamma_{B}(w)$. This allows $B$ to represent
precautionary working commitments based on stress-relevant alternatives.

\subsection{Basic modal assumptions and their frame conditions}

\label{sec:basic-modal-assumptions}

This section records the modal assumptions used later and compares
them with the standard S5 and KD45 packages.
\begin{defn}
An operator $M:\mathcal{L}\to\mathcal{L}$ is \emph{monotone} if 
\[
p\le q\quad\Rightarrow\quad Mp\le Mq.
\]
The same definition applies to $\Diamond_{M}$.
\end{defn}

In the fuzzy semantics above, sufficient conditions for monotonicity
are straightforward. If the implication $\Rightarrow$ is increasing
in its second argument, then 
\[
p\le q\quad\Rightarrow\quad\gamma_{M}(w,v)\Rightarrow p(v)\le\gamma_{M}(w,v)\Rightarrow q(v)
\]
for all $w,v$. Taking infima gives 
\[
Mp\le Mq.
\]
Similarly, if $\otimes$ is increasing in its second argument, then
\[
p\le q\quad\Rightarrow\quad\gamma_{M}(w,v)\otimes p(v)\le\gamma_{M}(w,v)\otimes q(v),
\]
and taking suprema gives 
\[
\Diamond_{M}p\le\Diamond_{M}q.
\]

\begin{defn}
An operator $M$ is \emph{factive} if 
\[
Mp\le p
\]
for all $p\in\mathcal{L}$.
\end{defn}

In crisp notation, this is $Mp\to p$. Factivity is appropriate for
knowledge-like assurance: if the institution has assurance-grade endorsement
of $p$, then $p$ should be correct.

A sufficient condition for factivity is 
\[
\gamma_{M}(w,w)=1\quad\text{for all }w\in W,
\]
together with 
\[
1\Rightarrow a=a.
\]
Indeed, 
\[
(Mp)(w)=\inf_{v\in W}\bigl(\gamma_{M}(w,v)\Rightarrow p(v)\bigr)\le\gamma_{M}(w,w)\Rightarrow p(w)=1\Rightarrow p(w)=p(w).
\]
Hence $Mp\le p$.
\begin{defn}
An operator $M$ is \emph{positively introspective} if 
\[
Mp\le MMp
\]
for all $p\in\mathcal{L}$.
\end{defn}

Institutionally, this says that if a claim is in the relevant $M$-register,
then the fact that it is in that register is itself available to the
register. In crisp Kripke semantics, a sufficient condition for positive
introspection is transitivity of the accessibility relation: 
\[
wR_{M}u\ \text{and}\ uR_{M}v\quad\Rightarrow\quad wR_{M}v.
\]
In fuzzy semantics, the corresponding sufficient condition is fuzzy
transitivity: 
\[
\gamma_{M}(w,u)\otimes\gamma_{M}(u,v)\le\gamma_{M}(w,v)\qquad\text{for all }w,u,v\in W.
\]
Assuming $\Rightarrow$ is the residuum of $\otimes$, this condition
yields 
\[
Mp\le MMp.
\]
Thus positive introspection follows when the evidence relation is
transitive in the relevant crisp or fuzzy sense.

Define epistemic inconsistency by 
\[
I_{M}(p):=Mp\wedge M\neg p,
\]
and global epistemic inconsistency by 
\[
I_{M}(w):=\sup_{p\in\mathcal{L}}I_{M}(p)(w).
\]
If $M$ is factive, then 
\[
I_{M}(p)\le U(p).
\]
Indeed, factivity gives $Mp\le p$ and $M\neg p\le\neg p$. Hence
\[
I_{M}(p)=Mp\wedge M\neg p\le p\wedge\neg p=U(p).
\]

It is useful to compare these assumptions with the two standard modal
packages for knowledge and belief: S5 and KD45. In crisp Kripke semantics,
knowledge is often modelled by S5. This corresponds to an accessibility
relation that is reflexive, transitive, and Euclidean; equivalently,
an equivalence relation. The associated principles include 
\[
Kp\to p,
\]
\[
Kp\to KKp,
\]
and 
\[
\neg Kp\to K\neg Kp.
\]
These are factivity, positive introspection, and negative introspection.
S5 is a useful benchmark for assurance-grade knowledge. If $Kp$ represents
formal institutional certification, then factivity is natural. Positive
introspection is also plausible when endorsements are formally recorded.
Negative introspection is stronger: institutions may fail to know
that they do not know. For this reason, the paper does not require
full S5.

Belief is often modelled by KD45. In crisp Kripke semantics, KD45
corresponds to a serial, transitive, and Euclidean accessibility relation.
Its characteristic principles include consistency, 
\[
Bp\to\neg B\neg p,
\]
positive introspection, 
\[
Bp\to BBp,
\]
and negative introspection, 
\[
\neg Bp\to B\neg Bp.
\]
KD45 is a benchmark for idealised belief. In this paper, however,
$B$ is not private belief. The institution may act on a severe stress
scenario before that scenario is actual. Thus the paper uses a weaker
belief-like package than KD45.

The technical results rely on two weaker packages.
\begin{itemize}
\item \textbf{The }\textbf{\emph{factive package}}\textbf{.} This package
used for knowledge-like standards, assumes: 
\[
p\le q\Rightarrow Mp\le Mq,
\]
\[
p\le q\Rightarrow\Diamond_{M}p\le\Diamond_{M}q,
\]
and 
\[
Mp\le p.
\]
This package is weaker than S5. It does not require positive introspection,
negative introspection, or an equivalence relation.
\item \textbf{The }\textbf{\emph{non-factive package}}\textbf{.} This package
used for belief-like standards, assumes: 
\[
p\le q\Rightarrow Mp\le Mq,
\]
\[
p\le q\Rightarrow\Diamond_{M}p\le\Diamond_{M}q,
\]
and 
\[
Mp\le MMp.
\]
This package is weaker than KD45. It does not assume factivity, negative
introspection, or full consistency. Positive introspection is retained
because working commitments should be institutionally recordable:
if the organisation acts as if $p$, then the fact that it has adopted
this stance should itself be available to the relevant register.
\end{itemize}
The comparison is: {\scriptsize
\[
\begin{array}{c|c|c|c|c}
\text{Package} & \text{Use} & \text{Factivity} & \text{Positive introspection} & \text{Negative introspection}\\
\hline \text{S5} & \text{ideal knowledge} & \text{yes} & \text{yes} & \text{yes}\\
\text{KD45} & \text{ideal belief} & \text{no} & \text{yes} & \text{yes}\\
\text{Factive package} & \text{assurance-like }K & \text{yes} & \text{not required} & \text{not required}\\
\text{Non-factive package} & \text{working-commitment }B & \text{no} & \text{yes} & \text{not required}
\end{array}
\]
}{\scriptsize\par}

\subsection{Risk propositions and institutional stances}

Fix a subject $S$, such as a portfolio, firm, banking network, flood
system, infrastructure system, or model environment. Let $\Risk(S)$
be a family of risk-relevant propositions about $S$.

For $M\in\{K,B\}$, 
\[
Kp=\text{the institution has assurance-grade endorsement of }p,
\]
and 
\[
Bp=\text{the institution has a disciplined working commitment to }p.
\]
The possibility operator $\Diamond_{M}$ represents live possibility
or reachability: 
\[
\Diamond_{M}p=\text{\ensuremath{p} remains live under the \ensuremath{M}-evidence field.}
\]
Thus $Mp$ is positive support, while $\Diamond_{M}p$ is non-exclusion
or live possibility. The following table summarizes the institutional
reading of the risk a proposition and its meta level stances.

\[
\begin{array}{c|l}
\text{Formula} & \text{Institutional reading}\\
\hline p & \text{the object-level risk claim holds}\\
Kp & \text{the claim has assurance-grade endorsement}\\
Bp & \text{the claim has working-commitment status}\\
\Diamond_{M}p & \text{the claim remains live under standard }M\\
\overline{M}p & \text{the claim is not ruled out under standard }M\\
H_{M}(p) & \text{the hesitation gap between non-exclusion and endorsement}\\
p\wedge\neg Kp & \text{the risk is present but not assurance-endorsed}\\
p\wedge\neg Bp & \text{the risk is present but not action-committed}\\
\Diamond_{M}p\wedge\neg Mp & \text{the claim is live but not endorsed under standard }M\\
\overline{M}p\wedge\neg Mp & \text{the claim is not ruled out but not positively supported}
\end{array}
\]

\subsection{Epistemic-risk refinements}

\label{subsec:refinements}

An epistemic-risk refinement attaches an epistemic condition to an
object-level risk claim. It records not only whether a risk obtains,
but also whether the institution has the relevant stance toward it.

The central refinement is the Moorean diagnostic: 
\[
\mathcal{R}_{M}^{\mathrm{Moore}}(p):=p\wedge\neg Mp.
\]
This says that $p$ obtains, but the institution lacks the corresponding
$M$-stance toward $p$. If $M=K$, the risk is real but lacks assurance-grade
endorsement. If $M=B$, the risk is real but has not been adopted
as a working commitment. In governance terms, the risk may be absent
from the evidential route, implementation process, approved model,
validation report, or other procedure through which institutional
recognition is produced.

The anti-Moorean refinement is 
\[
\mathcal{R}_{M}^{\mathrm{anti}}(p):=p\wedge M\neg p.
\]
This says that $p$ obtains while the institution supports $\neg p$.
This is stronger than absence of endorsement. The institution is committed,
under standard $M$, to the opposite claim. For example, if $p$ says
that a model underestimates tail loss, then 
\[
p\wedge K\neg p
\]
says that the model does underestimate tail loss while the institution
has assurance-grade endorsement that it does not. This is an epistemic
conflict, not only an epistemic gap.

For two standards $E$ and $M$, the mixed unsupported refinement
is 
\[
\mathcal{R}_{E,M}^{\mathrm{unsup}}(p):=Ep\wedge\neg Mp.
\]
This says that $p$ is supported under one standard but not under
another. For example, an exploratory stress model may support a concern
that has not yet been validated by the official model-risk framework.
The risk then has support somewhere in the institution, but not under
the standard required for formal use.

The mixed conflict refinement is 
\[
\mathcal{R}_{E,M}^{\mathrm{conf}}(p):=Ep\wedge M\neg p.
\]
This says that one standard supports $p$, while another supports
$\neg p$. Such cases are common in fragmented institutions. A business
unit may treat a model as adequate while validation treats it as unreliable.
A flood authority may treat a scenario as remote while an engineering
team treats it as plausible. A supervisor may treat a contagion channel
as live while an internal capital model excludes it.

These refinements show why epistemic risk is not ordinary risk under
another name. A flood event, model defect, or contagion pathway is
an object-level risk. But being under-evidenced, unmodelled, poorly
implemented or unvalidated, that is being unsupported under the official
standard, or contradicted by another institutional standard, is a
meta-level governance condition attached to that risk.

The role of the refinements is diagnostic. They identify the institutional
status of a risk claim. This is the epistemic layer that ordinary
first-order risk modelling does not capture. 

\subsection{Two governance principles}

The modal setup gives rise to two natural governance principles. Each
is plausible on its own. The first treats epistemic failures attached
to real risks as risk-relevant. The second requires real and decision-relevant
risks to be reachable by responsible institutional processes. The
difficulty, developed later, is that the unrestricted combination
of the two creates Moorean collapse pressure.

\paragraph{The Risk Management Principle.}

Let $\mathcal{R}$ be an epistemic-risk refinement operator. The Risk
Management Principle says: 
\[
p\in\mathrm{Risk}(S)\quad\Rightarrow\quad\mathcal{R}(p)\in\mathrm{Risk}(S).
\]
For the Moorean refinement 
\[
\mathcal{R}_{M}^{\mathrm{Moore}}(p):=p\wedge\neg Mp,
\]
this becomes 
\[
p\in\mathrm{Risk}(S)\quad\Rightarrow\quad p\wedge\neg Mp\in\mathrm{Risk}(S).
\]

The principle says that if $p$ is a real risk, then the absence of
the relevant institutional stance toward $p$ is also risk-relevant.
A real risk that is not known, not adopted as a working commitment,
not evidenced, not implemented, not validated, or not modelled creates
a governance exposure. Epistemic absence is therefore not neutral:
failure to recognise or act on a real risk is itself part of the risk
landscape.

For example, if a model underestimates tail loss, then the absence
of this underestimation from the validation report is not only an
administrative omission. It is a risk-relevant condition. Similarly,
if a contagion channel remains live but has not entered the working-commitment
register, the institution is exposed both to the contagion risk and
to its own failure to treat that channel as actionable.

\paragraph{The Risk Reach Principle.}

The second principle is the Risk Reach Principle: 
\[
p\in\mathrm{Risk}(S)\quad\Rightarrow\quad p\le\Diamond_{M}Mp.
\]
In crisp notation, 
\[
p\in\mathrm{Risk}(S)\quad\Rightarrow\quad p\to\Diamond_{M}Mp.
\]

This says that if $p$ is a real and decision-relevant risk, then
the relevant institutional stance toward $p$ should be reachable.
The principle does not say that the institution already has $Mp$.
It says that the governance architecture should make $Mp$ attainable
through an admissible evidential route, implementation route, validation
route, modelling route, or escalation route.

Thus the principle is a reachability requirement on the risk process
itself. A risk framework should not be designed so that real risks
remain permanently outside the scope of evidence, implementation,
modelling, validation, or working commitment.

The two principles are independently natural. The Risk Management
Principle says that unrecognised real risks matter. The Risk Reach
Principle says that real risks should be capable of entering responsible
institutional recognition. Their unrestricted combination, however,
generates the Moorean limitation developed in Section~\ref{sec:moorean}. 

\section{The Moorean Limitation: Collapse Pressure}

\label{sec:moorean}

The preceding sections introduced diagnostics such as 
\[
p\wedge\neg Kp\qquad\text{and}\qquad p\wedge\neg Bp.
\]
These formulas are useful because they identify real risks that lack
the relevant institutional stance. The first says that a risk is present
but not assurance-endorsed. The second says that a risk is present
but not adopted as a working commitment.

The problem is that these diagnostics cannot be treated as ordinary
object-level targets of the same stance whose absence they record.
If the institution is required to know $p\wedge\neg Kp$, then it
is pushed toward knowing $p$, which conflicts with the $\neg Kp$
part. Similarly, if the institution is required to believe $p\wedge\neg Bp$,
then the working-commitment register becomes unstable under standard
introspective discipline.

This section states the formal pressure. Section~\ref{sec:meta}
then gives the architectural response: such diagnostics should be
governed at a meta level rather than forced back into the object-level
$K$- or $B$-register.

\subsection{Standing assumptions}

We use the following background assumptions.

\begin{assumption}[Meet] The operation $\wedge$ is a meet: 
\[
p\wedge q\le p,\qquad p\wedge q\le q,
\]
and 
\[
r\le p,\ r\le q\quad\Rightarrow\quad r\le p\wedge q.
\]
\end{assumption}

\begin{assumption}[Bottom preservation] 
\[
\Diamond_{M}0=0.
\]
\end{assumption}

\begin{assumption}[Conjunction separation] 
\[
p\wedge\neg q\le0\quad\Rightarrow\quad p\le q.
\]
\end{assumption}

These assumptions hold in the crisp Boolean case and in standard fuzzy
settings with minimum conjunction and standard negation.

Let $\mathcal{R}$ be a factive epistemic-risk refinement, meaning
\[
\mathcal{R}(p)\le p.
\]
Thus $\mathcal{R}(p)$ does not introduce a different object-level
risk. It refines $p$ by adding an epistemic condition, such as absence
of endorsement or conflict between standards.

\subsection{Factive pressure}

First consider the factive package from Section~\ref{sec:basic-modal-assumptions}.
Thus $M$ is monotone, $\Diamond_{M}$ is monotone, and 
\[
Mp\le p.
\]
This is the knowledge-like case. It applies to assurance-grade standards.
\begin{thm}
\label{thm:factive-pressure} Suppose RMP holds for $\mathcal{R}$,
and RRP holds for $M$. Then, for every $p\in\Risk(S)$, 
\[
\mathcal{R}(p)\le\Diamond_{M}\bigl(Mp\wedge\mathcal{R}(p)\bigr).
\]
\end{thm}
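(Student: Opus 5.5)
The plan is to apply the two governance principles in sequence and then push the result through the monotonicity and factivity clauses of the factive package. Fix $p\in\Risk(S)$. First, the Risk Management Principle for $\mathcal{R}$ gives $\mathcal{R}(p)\in\Risk(S)$. This is the key move: the refined proposition is itself a risk proposition, so the Risk Reach Principle may be applied to it rather than to $p$. Doing so yields
\[
\mathcal{R}(p)\le\Diamond_{M}M\mathcal{R}(p).
\]
It then remains to show that the inner formula satisfies $M\mathcal{R}(p)\le Mp\wedge\mathcal{R}(p)$.

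I will bound $M\mathcal{R}(p)$ by each conjunct separately and then combine the two bounds with the Meet assumption. For the first conjunct, factivity of $\mathcal{R}$ gives $\mathcal{R}(p)\le p$. Monotonicity of $M$ (part of the factive package) then gives
\[
M\mathcal{R}(p)\le Mp.
\]
For the second conjunct, factivity of $M$, applied to the proposition $\mathcal{R}(p)\in\mathcal{L}$, gives
\[
M\mathcal{R}(p)\le\mathcal{R}(p).
\]
By the greatest-lower-bound clause of the Meet assumption, these two bounds together give $M\mathcal{R}(p)\le Mp\wedge\mathcal{R}(p)$.

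Finally, monotonicity of $\Diamond_{M}$ gives $\Diamond_{M}M\mathcal{R}(p)\le\Diamond_{M}(Mp\wedge\mathcal{R}(p))$. Chaining this with the Risk Reach inequality above, by transitivity of the pointwise order, gives the claimed bound
\[
\mathcal{R}(p)\le\Diamond_{M}\bigl(Mp\wedge\mathcal{R}(p)\bigr).
\]

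I expect no serious obstacle. The only point needing care is to apply the Risk Reach Principle to $\mathcal{R}(p)$, which the Risk Management Principle licenses, and not to $p$ itself. Applying it to $p$ would only give reachability of $Mp$, without the conjunct $\mathcal{R}(p)$. I also need to check that $\mathcal{R}(p)$ lies in $\mathcal{L}$, so that $M$ and $\Diamond_{M}$ are defined on it. This holds because $\Risk(S)\subseteq\mathcal{L}$, together with closure of $\mathcal{L}$ under $\wedge$ and $\neg$ in the Moorean case. Neither Bottom preservation nor Conjunction separation is needed here; I expect them to enter only in the later collapse corollaries.
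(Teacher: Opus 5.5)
Your proof is correct and follows the paper's argument essentially step for step. RMP places $\mathcal{R}(p)$ in $\Risk(S)$, RRP gives $\mathcal{R}(p)\le\Diamond_{M}M\mathcal{R}(p)$, and monotonicity and factivity of $M$ together with the meet property give $M\mathcal{R}(p)\le Mp\wedge\mathcal{R}(p)$; monotonicity of $\Diamond_{M}$ then transfers this bound to the outer formula. The only difference is presentational: the paper first proves the inner bound as a general lemma for any $q\le p$ and then takes $q=\mathcal{R}(p)$.
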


The theorem says that once an epistemic refinement of $p$ is treated
as a risk and is subjected to the same reachability principle, it
is pushed toward a reachable state in which both the refinement and
$Mp$ hold. This is the abstract form of the Moorean pressure.

For the Moorean refinement 
\[
\mathcal{R}_{M}^{\mathrm{Moore}}(p):=p\wedge\neg Mp,
\]
we obtain the following.
\begin{cor}
\label{cor:factive-moore} 
\[
p\wedge\neg Mp\le\Diamond_{M}(Mp\wedge\neg Mp)\le\Diamond_{M}U.
\]
\end{cor}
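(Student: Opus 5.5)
The corollary is the special case $\mathcal{R}=\mathcal{R}_{M}^{\mathrm{Moore}}$ of Theorem~\ref{thm:factive-pressure}, followed by one monotonicity step. I will prove the two inequalities separately.

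\textbf{First inequality.} First, $\mathcal{R}_{M}^{\mathrm{Moore}}$ is a factive refinement: by the Meet assumption, $p\wedge\neg Mp\le p$. The Risk Management Principle for this refinement is exactly the principle stated earlier: $p\in\Risk(S)$ implies $p\wedge\neg Mp\in\Risk(S)$. With RRP for $M$, Theorem~\ref{thm:factive-pressure} gives, for $p\in\Risk(S)$,
\[
p\wedge\neg Mp\le\Diamond_{M}\bigl(Mp\wedge(p\wedge\neg Mp)\bigr).
\]
Next, $Mp\wedge p\wedge\neg Mp\le Mp\wedge\neg Mp$ by the meet properties. Monotonicity of $\Diamond_{M}$, which is part of the factive package, then yields the first inequality. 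Factivity $Mp\le p$ would even make $Mp\wedge p=Mp$, but the meet bound alone suffices.

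\textbf{Second inequality.} $Mp\wedge\neg Mp$ equals $U(Mp)$ by the definition of structural uncertainty. The only point to check is that $Mp$ lies in $\mathcal{L}$, which holds because $M:\mathcal{L}\to\mathcal{L}$. So pointwise
\[
U(Mp)(w)\le\sup_{q\in\mathcal{L}}U(q)(w)=U(w).
\]
Applying monotonicity of $\Diamond_{M}$ once more gives $\Diamond_{M}(Mp\wedge\neg Mp)\le\Diamond_{M}U$.

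\textbf{Main obstacle.} This is a minor type issue: $U$ is defined as a supremum over $\mathcal{L}$ and need not itself belong to $\mathcal{L}$. Monotonicity of $\Diamond_{M}$ was stated only for elements of $\mathcal{L}$. I would handle this by reading $\Diamond_{M}U$ directly through the semantic formula $\sup_{v}\gamma_{M}(w,v)\otimes U(v)$, which makes sense for any function $W\to[0,1]$. Monotonicity there follows from $\otimes$ being increasing in its second argument, the same argument given in Section~\ref{sec:basic-modal-assumptions}. The corollary, like the theorem, is to be read for $p\in\Risk(S)$.
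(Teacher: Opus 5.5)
Your proposal is correct and follows the paper's proof essentially step for step: instantiate Theorem~\ref{thm:factive-pressure} with the Moorean refinement, weaken $Mp\wedge(p\wedge\neg Mp)$ to $Mp\wedge\neg Mp$ using the meet and monotonicity of $\Diamond_{M}$, and then bound $Mp\wedge\neg Mp=U(Mp)\le U$. Your remark that $U$ need not lie in $\mathcal{L}$ (handled by reading $\Diamond_{M}U$ through the semantic formula) and your note that the corollary is meant for $p\in\Risk(S)$ are points of care that the paper leaves implicit.
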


Thus true-but-unknown risk can persist only to the extent that structural
uncertainty is reachable. In the knowledge-like case, the Moorean
region is bounded by uncertainty or conflict in the evidential structure.

If structural uncertainty vanishes, the pressure becomes stronger.
\begin{cor}
\label{cor:factive-collapse} If $U=0$, then for every $p\in\Risk(S)$
satisfying the standing assumptions
\[
p=Mp.
\]
In the crisp case, this is the collapse 
\[
p\leftrightarrow Mp.
\]
\end{cor}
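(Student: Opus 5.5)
Corollary~\ref{cor:factive-collapse} reduces to three steps: apply Corollary~\ref{cor:factive-moore} with $U=0$, use bottom preservation to get $p\wedge\neg Mp\le 0$, then combine conjunction separation with factivity to obtain equality. Fix $p\in\Risk(S)$, and assume RMP for the Moorean refinement and RRP for $M$, as in Theorem~\ref{thm:factive-pressure}. The refinement $\mathcal{R}_{M}^{\mathrm{Moore}}(p)=p\wedge\neg Mp$ is factive by the Meet assumption, so Corollary~\ref{cor:factive-moore} gives
\[
p\wedge\neg Mp\le\Diamond_{M}(Mp\wedge\neg Mp)\le\Diamond_{M}U .
\]

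\emph{First step.} Suppose $U=0$ pointwise. Then $U(q)(w)\le U(w)=0$ for every $q\in\mathcal{L}$, so $Mp\wedge\neg Mp=U(Mp)\le 0$. Monotonicity of $\Diamond_{M}$ and bottom preservation then give $\Diamond_{M}(Mp\wedge\neg Mp)\le\Diamond_{M}0=0$. Hence $p\wedge\neg Mp\le 0$.

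\emph{Second step.} Apply the conjunction separation assumption with $q=Mp$. This turns $p\wedge\neg Mp\le 0$ into $p\le Mp$. The factive package supplies the converse $Mp\le p$, so $p=Mp$ in the pointwise order. In the crisp Boolean case the order is inclusion of truth sets, so equality reads $p\leftrightarrow Mp$.

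\emph{Main obstacle.} The only delicate point is bookkeeping: $Mp$ must lie in $\mathcal{L}$ so that the global supremum defining $U$ covers $U(Mp)$. This holds because $M:\mathcal{L}\to\mathcal{L}$. One might also worry that the second inequality of Corollary~\ref{cor:factive-moore} needs more than it appears to. If that step is unavailable, I would bypass $\Diamond_{M}U$ and argue directly: $Mp\wedge\neg Mp\le U$, and $U=0$ forces the inner argument to be $0$. Everything else is a direct chaining of the standing assumptions.
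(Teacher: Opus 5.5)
Your proposal is correct and follows the paper's own proof: Corollary~\ref{cor:factive-moore}, then $U=0$ with bottom preservation to get $p\wedge\neg Mp\le 0$, then conjunction separation for $p\le Mp$ and factivity for $Mp\le p$. The only cosmetic difference is that you pass through $\Diamond_{M}(Mp\wedge\neg Mp)$ and $U(Mp)\le U$ instead of substituting $U=0$ into $\Diamond_{M}U$ directly, which changes nothing of substance.
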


This is the factive version of the collapse. It does not say that
real institutions know every real risk. It says that an untyped formal
framework with factivity, reachability, and no structural uncertainty
cannot represent the space of true but unendorsed risk.

There is also a conflict version. Define 
\[
\mathcal{R}_{M}^{\mathrm{anti}}(p):=p\wedge M\neg p.
\]
This says that $p$ obtains while the institution supports $\neg p$.
\begin{cor}
\label{cor:factive-conflict} 
\[
p\wedge M\neg p\le\Diamond_{M}I_{M}(p)\le\Diamond_{M}I_{M}.
\]
\end{cor}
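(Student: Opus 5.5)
We specialise Theorem~\ref{thm:factive-pressure} to the anti-Moorean refinement $\mathcal{R}_{M}^{\mathrm{anti}}(p)=p\wedge M\neg p$, working under the factive package and the standing assumptions, for $p\in\Risk(S)$ (implicit in the statement, as for Corollary~\ref{cor:factive-moore}). First we check that $\mathcal{R}_{M}^{\mathrm{anti}}$ is a factive refinement, i.e.\ $p\wedge M\neg p\le p$; this is immediate from the Meet assumption. We also assume that RMP is adopted for this refinement, so that $p\wedge M\neg p\in\Risk(S)$ whenever $p\in\Risk(S)$. Theorem~\ref{thm:factive-pressure} then yields
\[
p\wedge M\neg p\le\Diamond_{M}\bigl(Mp\wedge(p\wedge M\neg p)\bigr).
\]

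Next we bound the argument of $\Diamond_{M}$ from above. By the Meet assumption (projections and associativity/commutativity derivable from the meet property),
\[
Mp\wedge(p\wedge M\neg p)\le Mp\wedge M\neg p=I_{M}(p).
\]
Monotonicity of $\Diamond_{M}$, which is part of the factive package, then gives the first inequality,
\[
p\wedge M\neg p\le\Diamond_{M}I_{M}(p).
\]

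For the second inequality, recall that global inconsistency is defined pointwise by $I_{M}(w)=\sup_{q}I_{M}(q)(w)$. Hence $I_{M}(p)\le I_{M}$ pointwise. A second application of monotonicity of $\Diamond_{M}$ gives $\Diamond_{M}I_{M}(p)\le\Diamond_{M}I_{M}$.

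The main obstacle is this last step: $I_{M}$ must be a legitimate argument of $\Diamond_{M}$. Either $I_{M}\in\mathcal{L}$, or $\Diamond_{M}$ is read via its defining formula on arbitrary functions $W\to[0,1]$. In the latter reading, monotonicity holds directly as long as $\otimes$ is increasing in its second argument, since the defining supremum formula applies to any graded function. The same point arises implicitly for $\Diamond_{M}U$ in Corollary~\ref{cor:factive-moore}, so it is handled the same way here. Optionally, factivity yields $I_{M}(p)\le U(p)$, linking this bound to the structural-uncertainty bound, although that link is not needed for the statement.
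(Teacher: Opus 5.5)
Your proposal is correct and follows the paper's proof essentially step for step. You apply Theorem~\ref{thm:factive-pressure} to the factive refinement $p\wedge M\neg p$, bound $Mp\wedge(p\wedge M\neg p)$ by $I_{M}(p)$ using the meet, and finish with monotonicity of $\Diamond_{M}$ and $I_{M}(p)\le I_{M}$; your two caveats, that RMP must be assumed for this particular refinement and that $I_{M}$ must be an admissible argument of $\Diamond_{M}$, are points the paper leaves implicit rather than departures from its argument.
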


Thus factive conflict can persist only where $M$-inconsistency is
reachable. In governance terms, a certified false reassurance is possible
only where the assurance architecture itself permits uncertainty or
conflict.

\subsection{Non-factive pressure}

Now consider the non-factive package from Section~\ref{sec:basic-modal-assumptions}.
Thus $M$ is monotone, $\Diamond_{M}$ is monotone, and 
\[
Mp\le MMp.
\]
This is the belief-like case.
\begin{thm}
\label{thm:belief-internal} For every $p\in\mathcal{L}$, 
\[
M(p\wedge\neg Mp)\le I_{M}(Mp)\le I_{M}.
\]
\end{thm}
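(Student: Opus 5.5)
The plan is to bound $M(p\wedge\neg Mp)$ separately by each conjunct of $I_{M}(Mp)$, and then combine the two bounds with the meet property. Unwinding the definition of epistemic inconsistency with $q:=Mp$ gives
\[
I_{M}(Mp)=MMp\wedge M\neg Mp .
\]
This is well defined: $M$ maps $\mathcal{L}$ to $\mathcal{L}$, so $Mp\in\mathcal{L}$, and $\mathcal{L}$ is closed under $\neg$ and $\wedge$, so $\neg Mp$ and $p\wedge\neg Mp$ also lie in $\mathcal{L}$. It therefore suffices to show
\[
M(p\wedge\neg Mp)\le MMp\qquad\text{and}\qquad M(p\wedge\neg Mp)\le M\neg Mp,
\]
and then apply the greatest-lower-bound clause of the Meet assumption.

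For the first bound I use the Meet assumption, which gives $p\wedge\neg Mp\le p$. Monotonicity of $M$ (part of the non-factive package) then yields $M(p\wedge\neg Mp)\le Mp$. Positive introspection gives $Mp\le MMp$, and chaining the two inequalities gives the first bound. This is the only place introspection enters, and it replaces the role factivity played in Theorem~\ref{thm:factive-pressure}. For the second bound, the Meet assumption gives $p\wedge\neg Mp\le\neg Mp$, and monotonicity of $M$ gives $M(p\wedge\neg Mp)\le M\neg Mp$ directly.

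Combining the two bounds by the Meet assumption gives $M(p\wedge\neg Mp)\le MMp\wedge M\neg Mp=I_{M}(Mp)$. For the final inequality I note that $Mp\in\mathcal{L}$ is one of the propositions over which the supremum defining the global inconsistency $I_{M}(w)$ ranges. Hence $I_{M}(Mp)(w)\le I_{M}(w)$ pointwise for every $w$, which is the statement $I_{M}(Mp)\le I_{M}$ in the pointwise order.

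The argument is essentially routine. The main point needing care is bookkeeping: $I_{M}(Mp)$ must be read as $MMp\wedge M\neg Mp$ and not as $Mp\wedge M\neg Mp$, because it is exactly this reading that forces the appeal to positive introspection. Beyond that, I only need to check that every object involved lies in $\mathcal{L}$, so that monotonicity and the supremum both apply. Factivity and bottom preservation are not needed.
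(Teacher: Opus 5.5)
Your proposal is correct and follows essentially the same route as the paper's proof: you bound $M(p\wedge\neg Mp)$ by $Mp\le MMp$ via monotonicity and positive introspection, by $M\neg Mp$ via monotonicity, combine the two with the meet, and then bound $I_{M}(Mp)$ by the supremum $I_{M}$. Your added check that $Mp\in\mathcal{L}$, so that it falls within the supremum defining $I_{M}$, is a small point the paper leaves implicit.
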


This theorem gives the internal non-factive pressure. If the institution
adopts the Moorean diagnostic $p\wedge\neg Mp$ as an $M$-commitment,
then the $M$-register becomes inconsistent about $Mp$. Thus belief
in a Moorean diagnostic is bounded by epistemic inconsistency.

Combining this with RMP and RRP gives the non-factive reach result.
\begin{thm}
\label{thm:belief-reach} For every $p\in\Risk(S)$, 
\[
p\wedge\neg Mp\le\Diamond_{M}I_{M}(Mp)\le\Diamond_{M}I_{M}.
\]
\end{thm}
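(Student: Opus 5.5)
The plan is to combine the two governance principles with Theorem~\ref{thm:belief-internal}, in the same way Theorem~\ref{thm:factive-pressure} was obtained in the factive case. We work in the non-factive package: $M$ and $\Diamond_M$ are monotone, $Mp\le MMp$ holds, and the Meet assumption is in force.

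\emph{Step 1 (RMP).} Fix $p\in\Risk(S)$. Apply the Risk Management Principle to the Moorean refinement $\mathcal{R}_{M}^{\mathrm{Moore}}$. This gives $q:=p\wedge\neg Mp\in\Risk(S)$.

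\emph{Step 2 (RRP applied to $q$).} Since $q\in\Risk(S)$, the Risk Reach Principle for $M$ yields
\[
p\wedge\neg Mp\;\le\;\Diamond_{M}M\bigl(p\wedge\neg Mp\bigr).
\]
No factivity is needed here. The principle is applied directly to the diagnostic itself rather than to $p$, which is exactly the ``untyped'' move whose consequences the theorem records.

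\emph{Step 3 (internal bound and monotonicity).} Theorem~\ref{thm:belief-internal} gives $M(p\wedge\neg Mp)\le I_{M}(Mp)$. Monotonicity of $\Diamond_M$ then upgrades this to
\[
\Diamond_{M}M(p\wedge\neg Mp)\le\Diamond_{M}I_{M}(Mp).
\]
Chaining with Step~2 gives the first inequality. For the second inequality, we need $I_M(Mp)\le I_M$ pointwise, where $I_M(w)=\sup_{r}I_M(r)(w)$. This requires $Mp\in\mathcal{L}$, which holds because $M:\mathcal{L}\to\mathcal{L}$. Monotonicity of $\Diamond_M$ would then give $\Diamond_M I_M(Mp)\le\Diamond_M I_M$.

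\emph{Expected obstacle.} The main subtlety is the last step: $\Diamond_M$ is only assumed monotone on $\mathcal{L}$, and the global function $I_M$ need not belong to $\mathcal{L}$. The plan is to read $\Diamond_M I_M$ through its defining formula $\sup_v\gamma_M(w,v)\otimes I_M(v)$, which makes sense for any $[0,1]$-valued function. Monotonicity then follows directly from $\otimes$ being increasing in its second argument, exactly as in the monotonicity argument of Section~\ref{sec:basic-modal-assumptions}; the same convention is presumably intended in Corollary~\ref{cor:factive-moore} with $\Diamond_M U$. Apart from this, the proof is a direct chaining of RMP, RRP and Theorem~\ref{thm:belief-internal}.
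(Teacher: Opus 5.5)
Your proposal is correct and follows the paper's proof step for step: RMP puts $q=p\wedge\neg Mp$ into $\Risk(S)$, RRP gives $q\le\Diamond_M Mq$, and Theorem~\ref{thm:belief-internal} together with monotonicity of $\Diamond_M$ yields both inequalities. Your added remark about the global $I_M$ possibly lying outside $\mathcal{L}$ is a genuine point the paper passes over; handling it through the defining supremum of $\Diamond_M$ is a harmless refinement that does not change the argument.
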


Thus true-but-unbelieved risk can persist only to the extent that
inconsistency in the working-commitment register is reachable. The
mechanism differs from the factive case. For knowledge-like standards,
the pressure comes from correctness. For belief-like standards, it
comes from positive introspection and the discipline of the commitment
register.

If the register is fully coherent, the pressure again collapses the
Moorean region.
\begin{cor}
\label{cor:belief-collapse} If $I_{M}=0$, then 
\[
p\wedge\neg Mp\le0.
\]
If conjunction separation also holds, then 
\[
p\le Mp.
\]
In the crisp case, this is the collapse 
\[
p\to Mp.
\]
\end{cor}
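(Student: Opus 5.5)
The plan is to derive the first inequality from Theorem~\ref{thm:belief-reach} and the second from conjunction separation.

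\emph{Step 1.} Suppose $I_{M}=0$, that is, $I_{M}(w)=0$ for every $w$. Theorem~\ref{thm:belief-reach} gives, for $p\in\Risk(S)$,
\[
p\wedge\neg Mp\le\Diamond_{M}I_{M}(Mp)\le\Diamond_{M}I_{M}.
\]
Since $I_{M}(Mp)\le I_{M}=0$, the argument $I_{M}(Mp)$ is the bottom element $0$.

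\emph{Step 2.} Monotonicity of $\Diamond_{M}$ (part of the non-factive package) and bottom preservation $\Diamond_{M}0=0$ then give $\Diamond_{M}I_{M}(Mp)=0$. Hence $p\wedge\neg Mp\le0$. To apply the monotonicity step literally, $0$ and $I_{M}(Mp)$ must be elements of $\mathcal{L}$. I will state this membership as implicit in the standing assumptions, where $0$ already appears.

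\emph{Step 3.} Apply conjunction separation with $q:=Mp$. From $p\wedge\neg Mp\le0$ it yields $p\le Mp$.

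\emph{Step 4.} In the crisp case, $\le$ is pointwise inclusion of truth sets. So $p\le Mp$ reads as the validity of $p\to Mp$.

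The only point needing care is scope. The corollary is stated without the restriction $p\in\Risk(S)$, but Theorem~\ref{thm:belief-reach} supplies that restriction. I will therefore read the corollary as ranging over $p\in\Risk(S)$, as in Corollary~\ref{cor:factive-collapse}.

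Alternatively, the restriction could be dropped by using Theorem~\ref{thm:belief-internal} alone. Under $I_{M}=0$ that theorem gives $M(p\wedge\neg Mp)=0$. This does not yield $p\wedge\neg Mp\le0$ without reachability, so the RRP route through Theorem~\ref{thm:belief-reach} is the one the proof should follow.

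Otherwise the argument is routine.
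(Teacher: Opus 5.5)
Your proposal is correct and matches the paper's proof: invoke Theorem~\ref{thm:belief-reach}, set $I_{M}=0$ and use bottom preservation to get $p\wedge\neg Mp\le0$, then apply conjunction separation with $q=Mp$. Restricting the corollary to $p\in\Risk(S)$ is exactly what the paper's proof implicitly assumes. Routing the argument through $I_{M}(Mp)$ rather than $I_{M}$ is only a minor, slightly tidier, variation.
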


Again, the conclusion is not descriptive. It is not saying that institutions
actually believe every real risk. It is a warning about the formal
architecture. If the framework treats true-but-uncommitted risk as
an ordinary object-level target of the same commitment operator, and
if the commitment register is idealised as coherent, then the space
of true but uncommitted risk disappears.

\section{Examples motivated by real applications}

\label{sec:examples}

Throughout this section, use the G\"odel/min package: 
\[
a\otimes b:=\min\{a,b\},\qquad a\Rightarrow b:=\begin{cases}
1, & a\le b,\\
b, & a>b.
\end{cases}
\]
Then 
\[
(\Diamond_{M}p)(w)=\sup_{v\in W}\min\{\gamma_{M}(w,v),p(v)\},
\]
and 
\[
(Mp)(w)=\inf_{v\in W}\bigl(\gamma_{M}(w,v)\Rightarrow p(v)\bigr).
\]
Thus $Mp$ represents support or endorsement across the evidence field,
while $\Diamond_{M}p$ represents live reachability. The dual operator
\[
\overline{M}p:=\neg M\neg p
\]
represents non-exclusion. In crisp Boolean cases, $\Diamond_{M}p$
and $\overline{M}p$ coincide. In fuzzy settings, they may differ
depending on the chosen operations. The hesitation margin 
\[
H_{M}(p):=\max\{0,\overline{M}p-Mp\}
\]
measures the gap between what is not ruled out and what is positively
endorsed.

\subsection{The two-world case}

\label{sec:two-world-case}

The two-world case is the smallest setting in which the main distinctions
of the paper appear. Let 
\[
W=\{w_{0},w_{1}\},
\]
where $w_{0}$ is the actual state and $w_{1}$ is a relevant alternative,
challenger state, or stress state. Write 
\[
p_{0}:=p(w_{0}),\qquad p_{1}:=p(w_{1}).
\]
In the crisp case, $p_{i}\in\{0,1\}$. In the fuzzy case, $p_{i}\in[0,1]$.
Thus $p_{0}$ measures the degree to which the risk obtains at the
actual state, while $p_{1}$ measures the degree to which it obtains
at the alternative state.

By the crisp reduction from Section~\ref{sec:setup}, at $w_{0}$
we have 
\[
Mp(w_{0})=\bigwedge_{v\in\Gamma_{M}(w_{0})}p(v),\qquad\Diamond_{M}p(w_{0})=\bigvee_{v\in\Gamma_{M}(w_{0})}p(v).
\]
The two-world table is therefore obtained by evaluating these conjunctions
and disjunctions for the four possible evidence sets. 

\subsubsection*{Crisp evidence sets}

First consider a crisp evidence relation. At the actual world $w_{0}$,
the evidence set is 
\[
\Gamma_{M}(w_{0})=\{v\in W:w_{0}R_{M}v\}.
\]
There are four possible evidence sets: 
\[
\varnothing,\qquad\{w_{0}\},\qquad\{w_{1}\},\qquad\{w_{0},w_{1}\}.
\]
For a crisp proposition $p$, the modal values at $w_{0}$ are: 
\[
\begin{array}{c|c|c|c|l}
\Gamma_{M}(w_{0}) & Mp(w_{0}) & \Diamond_{M}p(w_{0}) & \bar{M}p(w_{0}) & \text{Interpretation}\\
\hline \varnothing & 1 & 0 & 0 & \text{vacuous support; usually excluded by seriality}\\
\{w_{0}\} & p_{0} & p_{0} & p_{0} & \text{only the actual state matters}\\
\{w_{1}\} & p_{1} & p_{1} & p_{1} & \text{only the alternative state matters}\\
\{w_{0},w_{1}\} & p_{0}\wedge p_{1} & p_{0}\vee p_{1} & p_{0}\vee p_{1} & \text{both states matter}
\end{array}
\]
In the crisp Boolean case, 
\[
\Diamond_{M}p=\bar{M}p:=\neg M\neg p.
\]
Thus live possibility and non-exclusion coincide. The empty evidence
set is not appropriate for risk governance: it makes every claim vacuously
supported while making every claim impossible. This is why belief-like
systems usually impose seriality.

\subsubsection*{Knowledge-like assurance}

For knowledge-like assurance, the actual state should be included
in the evidence set. Hence the natural choices are 
\[
\Gamma_{K}(w_{0})=\{w_{0}\}\qquad\text{or}\qquad\Gamma_{K}(w_{0})=\{w_{0},w_{1}\}.
\]
If $\Gamma_{K}(w_{0})=\{w_{0}\}$, then 
\[
Kp(w_{0})=p_{0}.
\]
This makes every true claim at $w_{0}$ known at $w_{0}$, leaving
no room for true but unendorsed risk. The more informative case is
\[
\Gamma_{K}(w_{0})=\{w_{0},w_{1}\}.
\]
Then 
\[
Kp(w_{0})=p_{0}\wedge p_{1},\qquad\Diamond_{K}p(w_{0})=p_{0}\vee p_{1}.
\]
The four crisp possibilities are: 
\[
\begin{array}{c|c|c|c|l}
(p_{0},p_{1}) & Kp(w_{0}) & \Diamond_{K}p(w_{0}) & p_{0}\wedge\neg Kp(w_{0}) & \text{Governance reading}\\
\hline (0,0) & 0 & 0 & 0 & \text{\ensuremath{p} is excluded}\\
(1,0) & 0 & 1 & 1 & \text{actual risk, but not assurance-endorsed}\\
(0,1) & 0 & 1 & 0 & \text{not actual, but live nearby}\\
(1,1) & 1 & 1 & 0 & \text{robust assurance-grade endorsement}
\end{array}
\]
The second row is the canonical Moorean case: 
\[
p\wedge\neg Kp.
\]
The risk is real at $w_{0}$, but it is not assurance-endorsed because
the relevant alternative $w_{1}$ does not support it. Institutionally,
this is a fragile risk claim: it is true in the actual case, but it
fails across the assurance evidence set.

The third row is the possibility-only case: 
\[
\Diamond_{K}p\wedge\neg p.
\]
The risk is not actual at $w_{0}$, but it remains live at $w_{1}$.
This may justify monitoring, challenger modelling, or further investigation.

\subsubsection*{Belief-like working commitment}

For a belief-like working commitment, factivity is not required. The
evidence set may exclude the actual world. For example, take 
\[
\Gamma_{B}(w_{0})=\{w_{1}\}.
\]
Then 
\[
Bp(w_{0})=p_{1},\qquad\Diamond_{B}p(w_{0})=p_{1}.
\]
If 
\[
(p_{0},p_{1})=(0,1),
\]
then 
\[
Bp(w_{0})=1\qquad\text{but}\qquad p(w_{0})=0.
\]
This is non-factive working commitment. The institution acts as if
$p$ is decision-relevant because $p$ holds in the stress-relevant
alternative, even though $p$ is not actual.

If instead 
\[
(p_{0},p_{1})=(1,0),
\]
then 
\[
p(w_{0})=1\qquad\text{but}\qquad Bp(w_{0})=0.
\]
This is the belief-like Moorean diagnostic: 
\[
p\wedge\neg Bp.
\]
The risk is real, but the working-commitment register does not contain
it. Thus the same two-world structure represents both precautionary
commitment and failure of action-guiding recognition.

\subsubsection*{Two-world S5 and KD45 frames}

In the two-world setting, the S5 knowledge frames are the equivalence
relations on $W$. There are two main possibilities: 
\[
\begin{array}{c|c|l}
\text{Frame} & \Gamma_{K}(w) & \text{Interpretation}\\
\hline \text{identity} & \Gamma_{K}(w_{0})=\{w_{0}\},\ \Gamma_{K}(w_{1})=\{w_{1}\} & \text{each world is perfectly distinguished}\\
\text{universal} & \Gamma_{K}(w_{0})=\Gamma_{K}(w_{1})=\{w_{0},w_{1}\} & \text{the two worlds are indistinguishable}
\end{array}
\]
The identity frame makes knowledge too strong for most governance
applications: if $p$ is true, then $Kp$ is true. The universal frame
is more useful for modelling assurance under uncertainty: $Kp$ holds
only when $p$ is true in both the actual and alternative states.

For KD45 belief, the relation is serial, transitive, and Euclidean.
On two worlds, the main possibilities are: 
\[
\begin{array}{c|c|l}
\text{Evidence sets} & \text{Factive?} & \text{Interpretation}\\
\hline \Gamma_{B}(w_{0})=\Gamma_{B}(w_{1})=\{w_{0}\} & \text{not everywhere} & \text{all belief is anchored on }w_{0}\\
\Gamma_{B}(w_{0})=\Gamma_{B}(w_{1})=\{w_{1}\} & \text{not everywhere} & \text{all belief is anchored on }w_{1}\\
\Gamma_{B}(w_{0})=\{w_{0}\},\ \Gamma_{B}(w_{1})=\{w_{1}\} & \text{yes} & \text{identity case}\\
\Gamma_{B}(w_{0})=\Gamma_{B}(w_{1})=\{w_{0},w_{1}\} & \text{yes} & \text{universal case}
\end{array}
\]
The non-factive KD45 cases are useful for working commitment. They
allow the institution to act on a stress state or decision-relevant
scenario that is not the actual state.

\subsubsection*{Fuzzy two-world case}

Now let the evidence relation be fuzzy. Write 
\[
\gamma_{M}=\begin{pmatrix}a & b\\
c & d
\end{pmatrix},
\]
where 
\[
a=\gamma_{M}(w_{0},w_{0}),\quad b=\gamma_{M}(w_{0},w_{1}),\quad c=\gamma_{M}(w_{1},w_{0}),\quad d=\gamma_{M}(w_{1},w_{1}).
\]
Let 
\[
p(w_{0})=x,\qquad p(w_{1})=y.
\]
Using the Gödel/min package, 
\[
(Mp)(w_{0})=\min\{a\Rightarrow x,\ b\Rightarrow y\},
\]
and 
\[
(\Diamond_{M}p)(w_{0})=\max\{\min(a,x),\min(b,y)\}.
\]
Similarly, 
\[
(Mp)(w_{1})=\min\{c\Rightarrow x,\ d\Rightarrow y\},
\]
and 
\[
(\Diamond_{M}p)(w_{1})=\max\{\min(c,x),\min(d,y)\}.
\]
With standard negation, the dual non-exclusion operator at $w_{0}$
is 
\[
(\bar{M}p)(w_{0})=1-\min\{a\Rightarrow(1-x),\ b\Rightarrow(1-y)\}.
\]
The hesitation margin at $w_{0}$ is 
\[
H_{M}(p)(w_{0})=\max\{0,\bar{M}p(w_{0})-Mp(w_{0})\}.
\]
This fuzzy case shows why $\Diamond_{M}p$ and $\bar{M}p$ should
be kept conceptually distinct. In crisp Boolean settings they coincide.
In fuzzy settings, they may differ. The operator $\Diamond_{M}p$
measures the degree to which $p$ is live somewhere in the evidence
field. The operator $\bar{M}p$ measures the degree to which the support
operator fails to rule out $p$.

\subsubsection*{Varying $p_{0}$ and $p_{1}$}

A useful way to visualise the fuzzy two-world case is to let 
\[
p_{0}=p(w_{0}),\qquad p_{1}=p(w_{1})
\]
range over the whole unit square $[0,1]^{2}$. Fix, for illustration,
\[
\gamma_{K}(w_{0},w_{0})=1,\qquad\gamma_{K}(w_{0},w_{1})=0.6,
\]
so that the assurance standard treats the actual state as fully relevant
and the alternative state as partially relevant. Also fix 
\[
\gamma_{B}(w_{0},w_{0})=0,\qquad\gamma_{B}(w_{0},w_{1})=1,
\]
so that the working-commitment standard focuses on the stress state.

Under these choices, the following quantities become functions on
the $(p_{0},p_{1})$-plane: 
\[
Kp(w_{0}),\qquad\Diamond_{K}p(w_{0}),\qquad\bar{K}p(w_{0}),\qquad H_{K}(p)(w_{0}),
\]
\[
p(w_{0})\wedge\neg Kp(w_{0}),\qquad p(w_{0})\wedge K\neg p(w_{0}),\qquad I_{K}(p)(w_{0}),
\]
and 
\[
Bp(w_{0}),\qquad p(w_{0})\wedge\neg Bp(w_{0}),\qquad\max\{0,Bp(w_{0})-p(w_{0})\}.
\]

These plots show how each diagnostic changes as the actual-state degree
$p_{0}$ and the alternative-state degree $p_{1}$ vary. For instance,
the Moorean knowledge diagnostic 
\[
p(w_{0})\wedge\neg Kp(w_{0})
\]
is strongest when $p_{0}$ is high but the $K$-standard does not
endorse $p$. This occurs when the risk is strong at the actual state
but weak, unstable, or insufficiently supported at the relevant alternative
state. By contrast, the non-factive belief gap 
\[
\max\{0,Bp(w_{0})-p(w_{0})\}
\]
is strongest when the stress state supports $p$ more strongly than
the actual state. This is the formal pattern of precautionary working
commitment.

The following figures illustrate the main $K$-side quantities. The
contour lines mark the levels 0.25, 0.5, and 0.75 of the corresponding
modal diagnostic.

\begin{figure}[h]
\centering \includegraphics[width=0.45\textwidth]{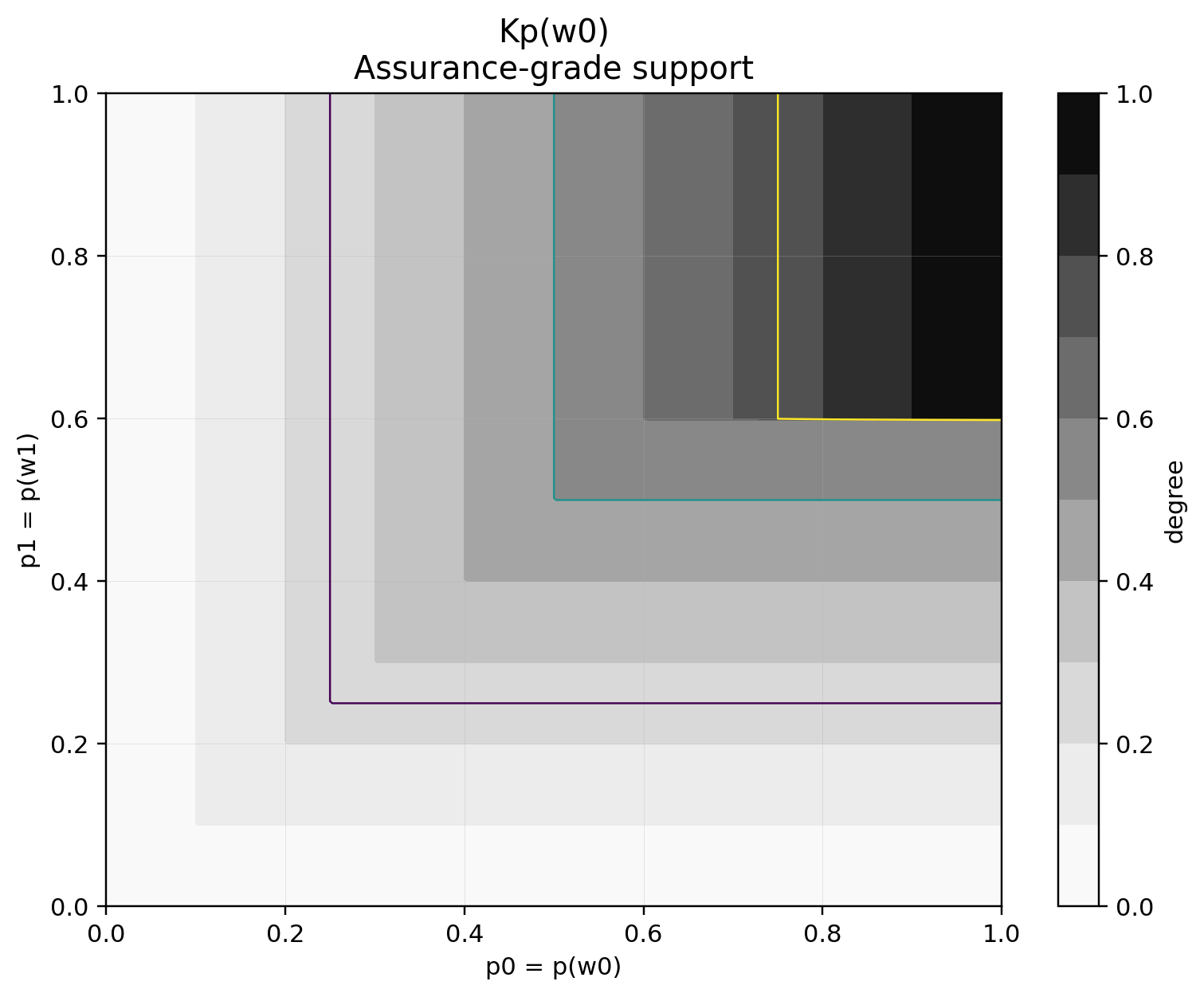}
\includegraphics[width=0.45\textwidth]{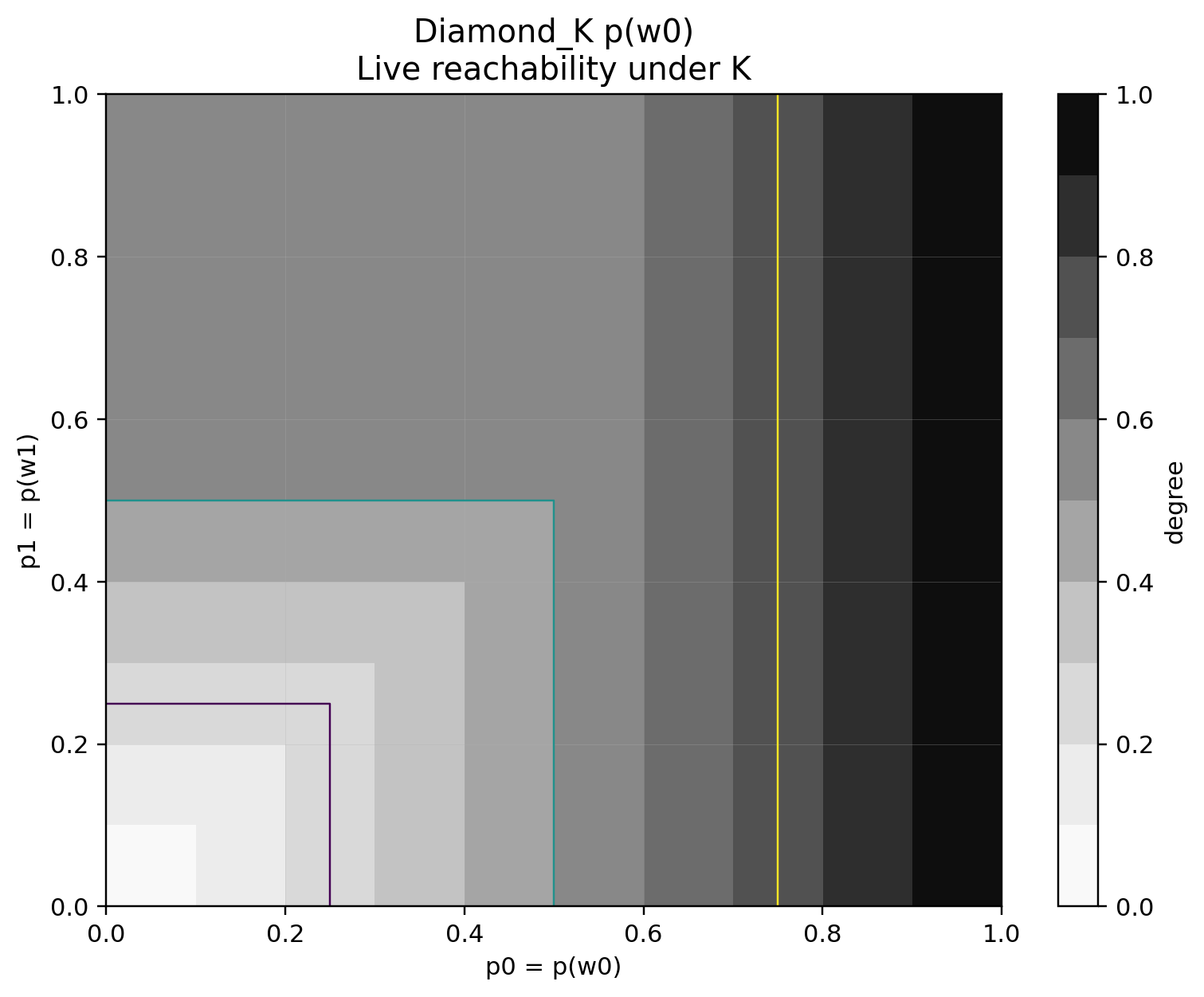}
\caption{Two-world fuzzy case. Left: assurance-grade support $Kp(w_{0})$.
Right: live reachability $\Diamond_{K}p(w_{0})$.}
\label{fig:two-world-K-diamond}
\end{figure}

\begin{figure}[h]
\centering \includegraphics[width=0.45\textwidth]{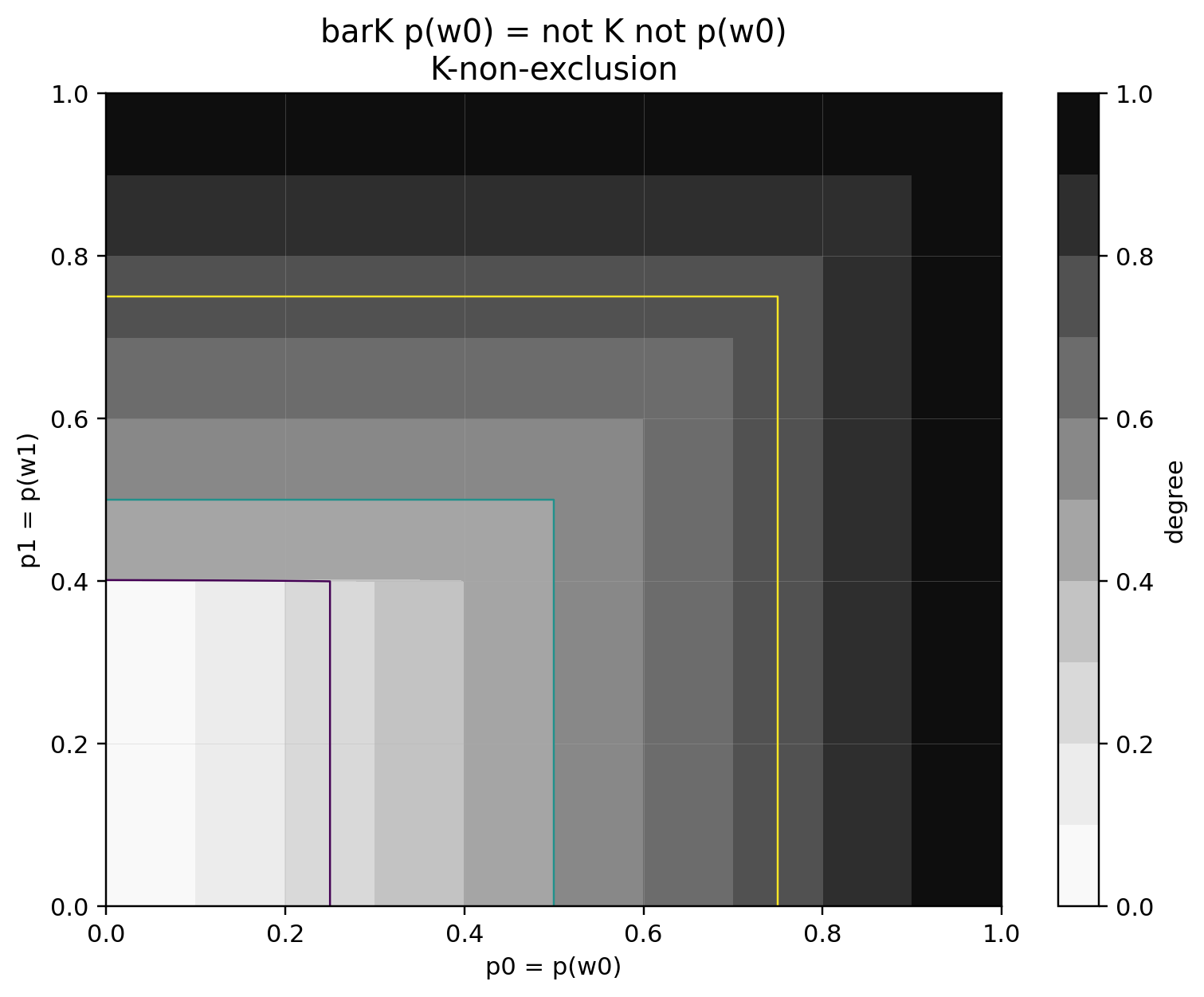}
\includegraphics[width=0.45\textwidth]{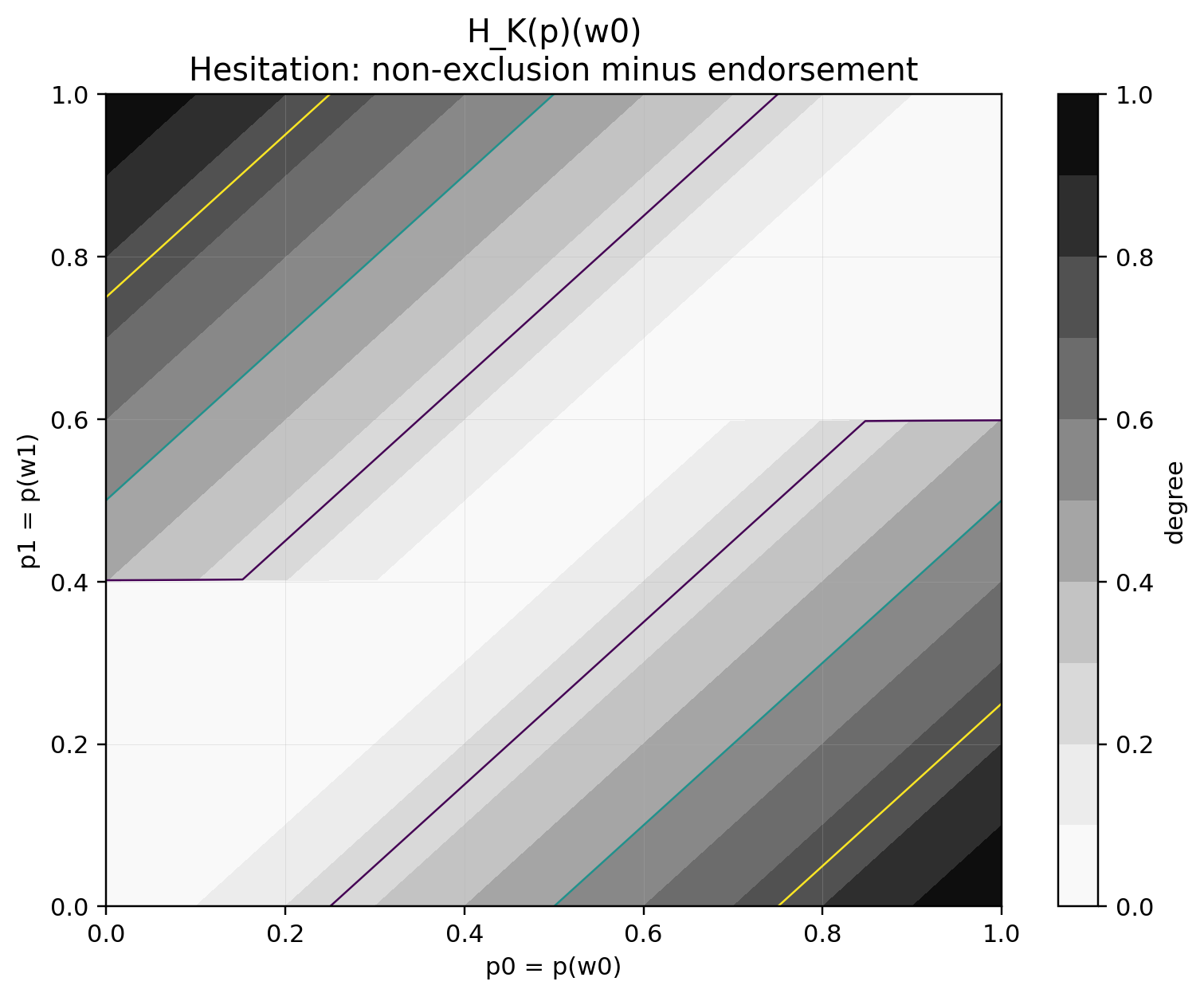}
\caption{Two-world fuzzy case. Left: non-exclusion $\bar{K}p(w_{0})$. Right:
hesitation $H_{K}(p)(w_{0})$.}
\label{fig:two-world-bark-hk}
\end{figure}

The next figures show the main diagnostics.

\begin{figure}[h]
\centering \includegraphics[width=0.45\textwidth]{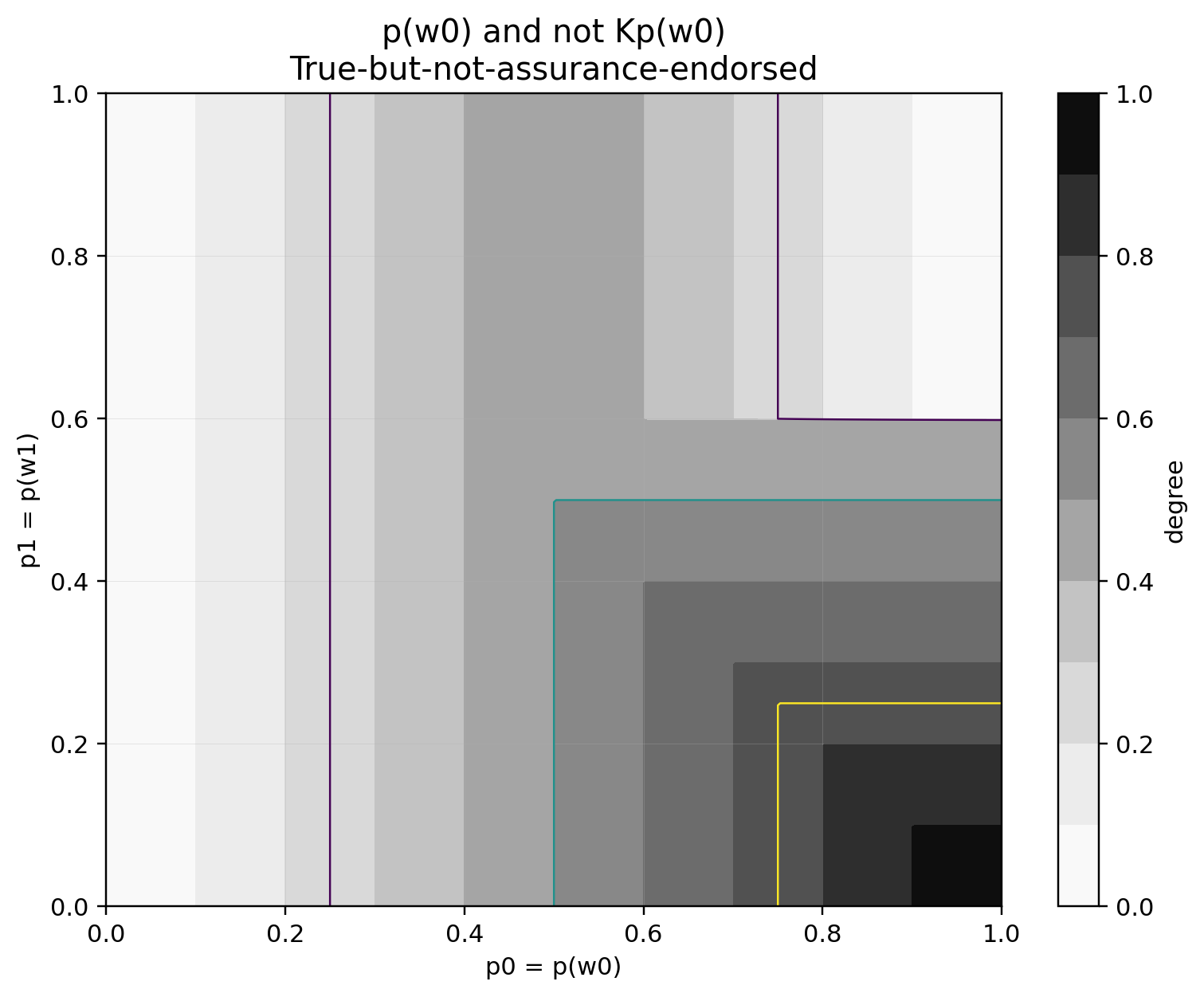}
\includegraphics[width=0.45\textwidth]{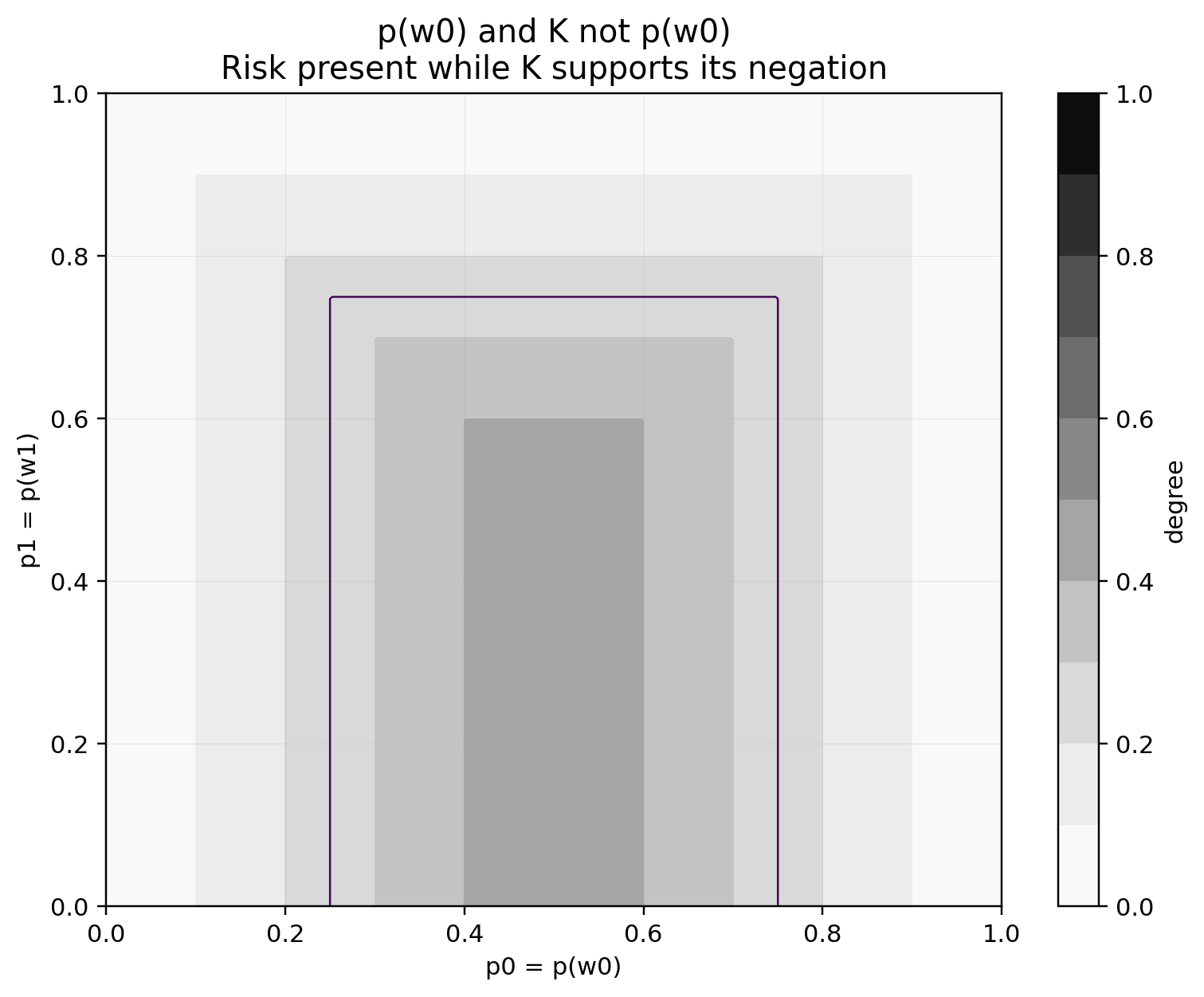}
\caption{Two-world fuzzy case. Left: Moorean diagnostic $p(w_{0})\wedge\neg Kp(w_{0})$.
Right: anti-Moorean diagnostic $p(w_{0})\wedge K\neg p(w_{0})$.}
\label{fig:two-world-moorean-anti}
\end{figure}

\begin{figure}[h]
\centering \includegraphics[width=0.45\textwidth]{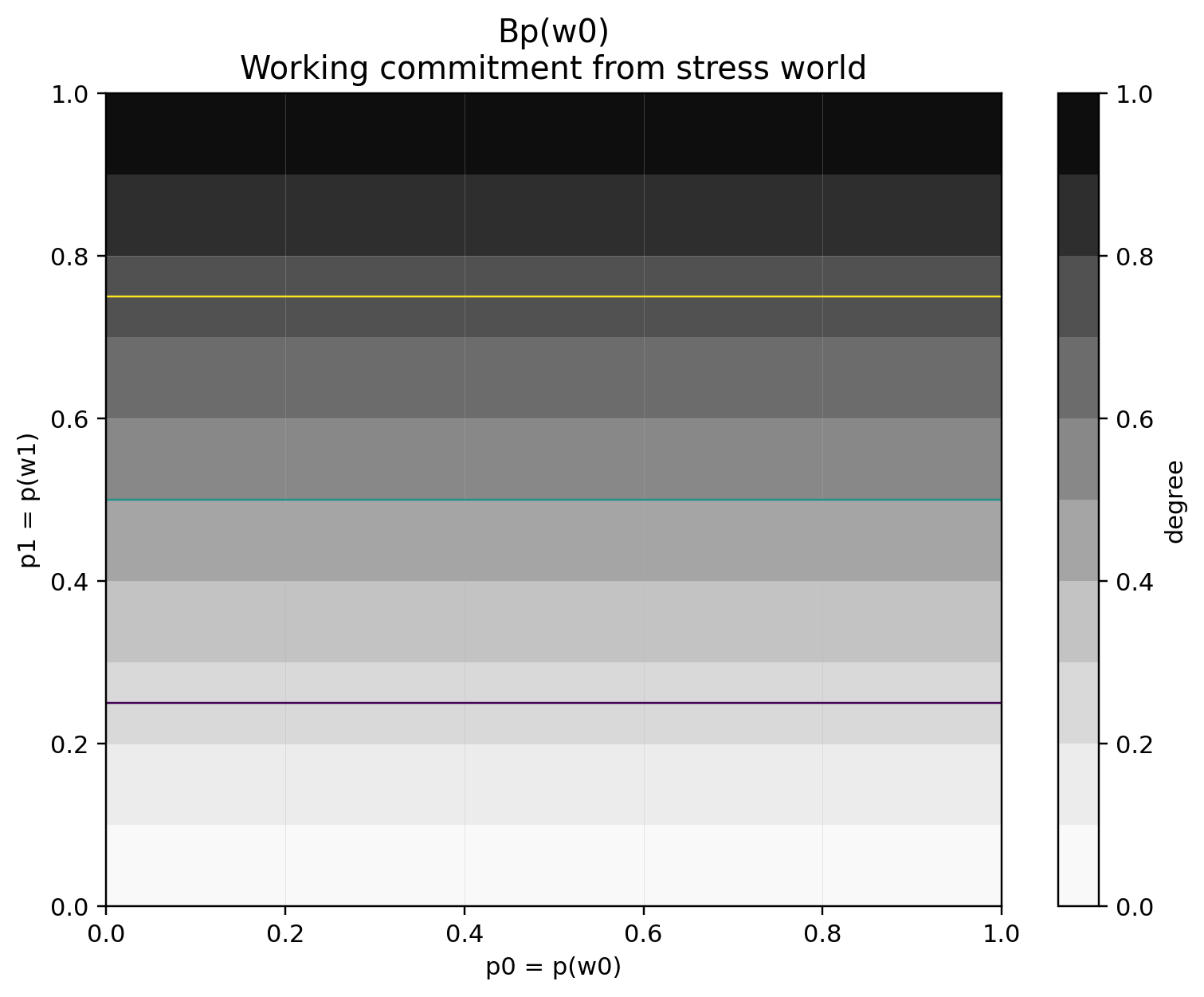}
\includegraphics[width=0.45\textwidth]{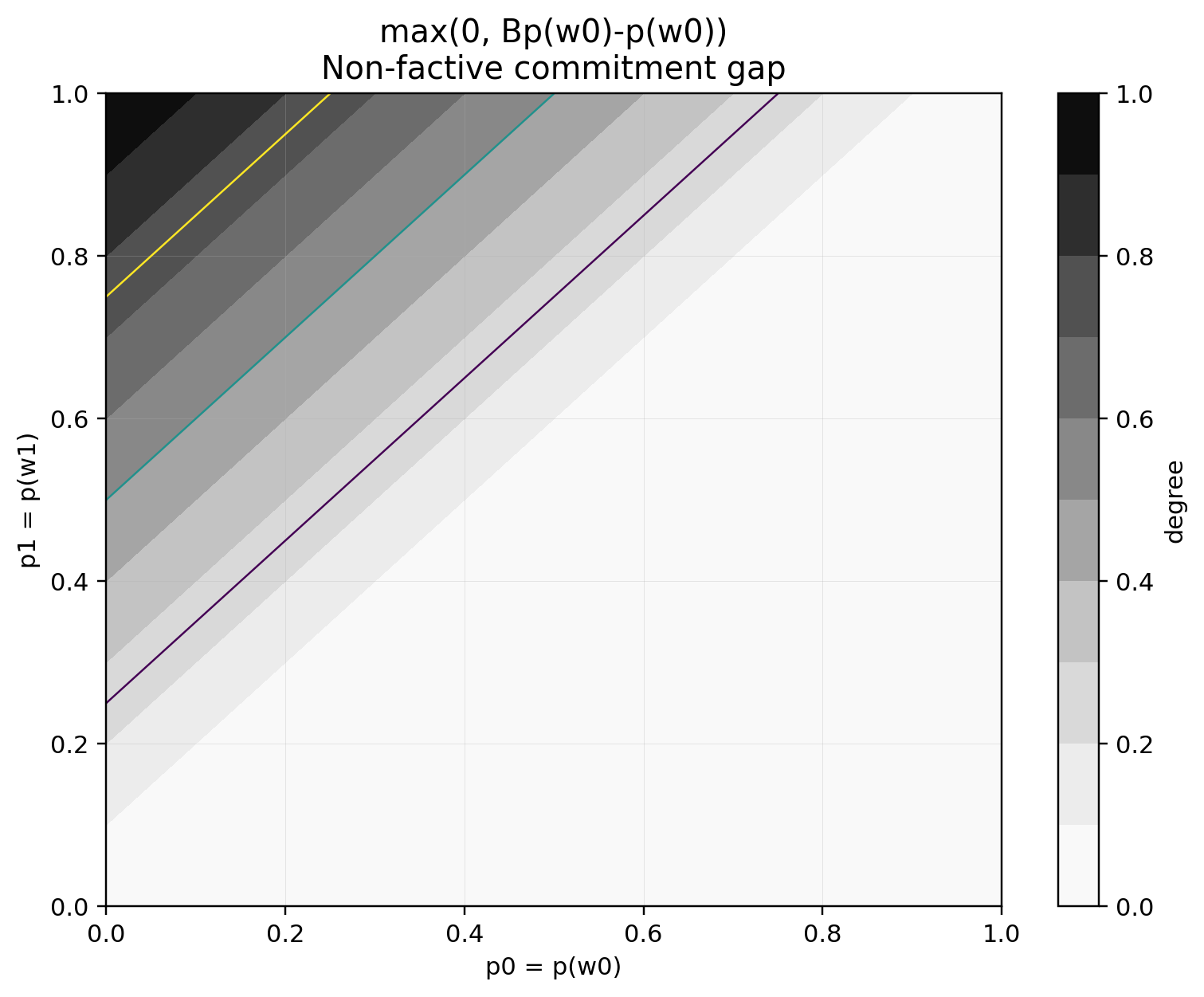}
\caption{Two-world fuzzy case. Left: working commitment $Bp(w_{0})$. Right:
non-factive commitment gap $\max\{0,Bp(w_{0})-p(w_{0})\}$.}
\label{fig:two-world-B-gap}
\end{figure}

\subsubsection*{Numerical application: model risk and stress commitment}

Consider a two-world risk-governance setting in which $w_{0}$ is
the current institutional model state and $w_{1}$ is a challenger
or stress state.

First, let 
\[
p=\text{``the \ensuremath{99\%} expected shortfall breaches the capital threshold.''}
\]
Suppose the capital threshold is 
\[
c=100,
\]
and the two model states give 
\[
\mathrm{ES}_{0.99}(w_{0})=112,\qquad\mathrm{ES}_{0.99}(w_{1})=96.
\]
Thus, in crisp form, 
\[
p(w_{0})=1,\qquad p(w_{1})=0.
\]
The current model state breaches the threshold, but the challenger
state does not. Let 
\[
\Gamma_{K}(w_{0})=\{w_{0},w_{1}\}.
\]
Then 
\[
Kp(w_{0})=0,\qquad\Diamond_{K}p(w_{0})=1,\qquad\bar{K}p(w_{0})=1.
\]
Hence 
\[
p(w_{0})\wedge\neg Kp(w_{0})=1.
\]
The interpretation is direct: the breach occurs in the current model
state, but the breach conclusion is not robust across the assurance
evidence set. This is a Moorean model-risk diagnostic.

The corresponding table is: {\scriptsize
\[
\begin{array}{c|c|c|c|c|l}
p(w_{0}) & p(w_{1}) & Kp(w_{0}) & \Diamond_{K}p(w_{0}) & p\wedge\neg Kp(w_{0}) & \text{Reading}\\
\hline 1 & 0 & 0 & 1 & 1 & \text{breach occurs but lacks assurance robustness}
\end{array}
\]
}{\scriptsize\par}

Now consider a second proposition: 
\[
q=\text{``a systemic liquidity cascade occurs.''}
\]
Suppose 
\[
q(w_{0})=0,\qquad q(w_{1})=1.
\]
Thus no cascade occurs in the current state, but a cascade occurs
in the stress state. Let 
\[
\Gamma_{B}(w_{0})=\{w_{1}\}.
\]
Then 
\[
Bq(w_{0})=1,\qquad\Diamond_{B}q(w_{0})=1,\qquad\bar{B}q(w_{0})=1,
\]
while 
\[
q(w_{0})=0.
\]
This is non-factive working commitment. The institution is not claiming
that a cascade is actual. It is adopting the cascade scenario as action-guiding
because it holds in the stress-relevant world.

The table is: {\scriptsize
\[
\begin{array}{c|c|c|c|c|l}
q(w_{0}) & q(w_{1}) & Bq(w_{0}) & \Diamond_{B}q(w_{0}) & \bar{B}q(w_{0}) & \text{Reading}\\
\hline 0 & 1 & 1 & 1 & 1 & \text{not actual, but adopted as a precautionary commitment}
\end{array}
\]
}{\scriptsize\par}

\subsubsection*{Fuzzy numerical application}

Finally, consider a fuzzy liquidity case. Let 
\[
r=\text{``the bank breaches the liquidity threshold.''}
\]
Suppose 
\[
r(w_{0})=0,\qquad r(w_{1})=0.9.
\]
Thus there is no breach in the current state, but a strong breach
in the stress state. Let 
\[
\gamma_{K}(w_{0},w_{0})=1,\qquad\gamma_{K}(w_{0},w_{1})=0.6.
\]
Using the Gödel/min package, 
\[
(Kr)(w_{0})=\min\{1\Rightarrow0,\;0.6\Rightarrow0.9\}=0.
\]
The live possibility is 
\[
(\Diamond_{K}r)(w_{0})=\max\{\min(1,0),\min(0.6,0.9)\}=0.6.
\]
With standard negation, 
\[
(K\neg r)(w_{0})=\min\{1\Rightarrow1,\;0.6\Rightarrow0.1\}=0.1.
\]
Therefore 
\[
\bar{K}r(w_{0})=1-K\neg r(w_{0})=0.9.
\]
The hesitation margin is 
\[
H_{K}(r)(w_{0})=\max\{0,\bar{K}r(w_{0})-Kr(w_{0})\}=0.9.
\]
The fuzzy table is: {\scriptsize
\[
\begin{array}{c|c|c|c|c|c|l}
r(w_{0}) & r(w_{1}) & Kr(w_{0}) & \Diamond_{K}r(w_{0}) & \bar{K}r(w_{0}) & H_{K}(r)(w_{0}) & \text{Reading}\\
\hline 0 & 0.9 & 0 & 0.6 & 0.9 & 0.9 & \text{not endorsed, live, and not ruled out}
\end{array}
\]
}This example shows why $\Diamond_{M}p$ and $\bar{M}p$ should be
distinguished in fuzzy semantics: 
\[
\Diamond_{K}r(w_{0})=0.6,\qquad\bar{K}r(w_{0})=0.9.
\]
The first measures live reachability through the evidence relation.
The second measures failure to rule out $r$ through the dual of the
support operator.

If the working-commitment standard focuses only on the stress world,
\[
\gamma_{B}(w_{0},w_{0})=0,\qquad\gamma_{B}(w_{0},w_{1})=1,
\]
then 
\[
(Br)(w_{0})=\min\{0\Rightarrow0,\;1\Rightarrow0.9\}=0.9.
\]
Thus the institution has a strong working commitment to the liquidity-breach
scenario, even though 
\[
r(w_{0})=0.
\]
Again, $B$ is non-factive: 
\[
Br(w_{0})>r(w_{0}).
\]

The governance interpretation is clear. Under the assurance standard
$K$, the breach is not certified. Under the working-commitment standard
$B$, the stress breach is strong enough to guide action. The same
two-world structure therefore represents assurance failure, live possibility,
non-exclusion, hesitation, and precautionary commitment.

\subsection{Model risk: assurance under model variation}

\label{sec:model-risk-example}

Consider a financial institution using quantitative models to estimate
tail loss. The first-order question is whether the estimated loss
breaches a capital, solvency, or risk-appetite threshold. The second-order
question is whether that conclusion is stable enough for institutional
reliance.

Let a model-evaluation state be 
\[
w=(D,\mathcal{M},\theta),
\]
where $D$ is the dataset, $\mathcal{M}$ is the model class, and
$\theta$ is the parameter vector. Let $L^{(w)}$ be the loss variable
implied by $w$. For confidence level $\alpha$, define 
\[
\mathrm{VaR}_{\alpha}(w):=\inf\{x:\mathbb{P}_{\mathcal{M},\theta}(L^{(w)}\le x)\ge\alpha\},
\]
and 
\[
\mathrm{ES}_{\alpha}(w):=\mathbb{E}_{\mathcal{M},\theta}\bigl[L^{(w)}\mid L^{(w)}\ge\mathrm{VaR}_{\alpha}(w)\bigr].
\]

Let $c$ be a capital or risk-appetite threshold. A natural object-level
risk proposition is 
\[
p(w)=\mathbf{1}_{\{\mathrm{ES}_{\alpha}(w)\ge c\}},
\]
or, in a graded version, 
\[
p(w)=\psi\bigl(\mathrm{ES}_{\alpha}(w)-c\bigr),
\]
where $\psi:\mathbb{R}\to[0,1]$ is increasing. Thus $p(w)$ measures
the degree to which model state $w$ indicates a tail-loss breach.

To represent assurance under model variation, define an evidence relation
between model states by 
\[
\gamma_{K}(w,v)=\exp\Bigl(-\lambda_{D}d_{D}(D,D')-\lambda_{\mathcal{M}}\mathbf{1}_{\mathcal{M}\neq\mathcal{M}'}-\lambda_{\theta}\|\theta-\theta'\|^{2}\Bigr),
\]
where 
\[
v=(D',\mathcal{M}',\theta').
\]
The value $\gamma_{K}(w,v)$ is high when $v$ is a nearby admissible
model state relative to $w$. Nearby states may differ in data cleaning,
sampling window, model class, parameter estimate, or implementation.
The weights 
\[
\lambda_{D},\qquad\lambda_{\mathcal{M}},\qquad\lambda_{\theta}
\]
encode how sensitive the assurance standard is to each source of variation.

The current model output is $p(w)$. The assurance-grade status is
$Kp(w)$. Thus $Kp(w)$ asks whether the breach conclusion survives
across the relevant model neighbourhood. The live-reachability status
is $\Diamond_{K}p(w)$: it asks whether a breach remains possible
under admissible model variation. The dual status $\overline{K}p(w)$
asks whether a breach has not been ruled out by the assurance standard.
The hesitation margin $H_{K}(p)(w)$ measures the gap between non-exclusion
and endorsement.

The main regions are therefore: 
\[
Kp:\quad\text{the breach is robustly assurance-endorsed;}
\]
\[
p\wedge\neg Kp:\quad\text{the breach occurs, but lacks assurance-grade robustness;}
\]
\[
\Diamond_{K}p\setminus p:\quad\text{there is no breach at the current state, but breach is nearby;}
\]
\[
\Diamond_{K}p\wedge\neg Kp:\quad\text{the breach lies in the hesitation region;}
\]
\[
\neg\Diamond_{K}p:\quad\text{breach is excluded under the current evidence standard.}
\]

We now specialise this general setup. Suppose the dataset $D$ is
fixed and the model class is fixed to the lognormal family: 
\[
L\sim\mathrm{Lognormal}(\mu,\sigma^{2}),
\]
with parameter 
\[
\theta=(\mu,\sigma).
\]
Other challenger classes could include gamma, Weibull, Student-$t$,
or a lognormal body with a generalized Pareto tail. The lognormal
case is used here because it gives a simple two-dimensional parameter
space.

For a lognormal loss variable, 
\[
\mathrm{VaR}_{\alpha}(\mu,\sigma)=\exp(\mu+\sigma z_{\alpha}),
\]
where $z_{\alpha}$ is the $\alpha$-quantile of the standard normal
distribution. The expected shortfall is 
\[
\mathrm{ES}_{\alpha}(\mu,\sigma)=\exp\left(\mu+\frac{\sigma^{2}}{2}\right)\frac{\Phi(\sigma-z_{\alpha})}{1-\alpha}.
\]
Define the crisp breach proposition 
\[
p(\mu,\sigma)=\mathbf{1}_{\{\mathrm{ES}_{0.99}(\mu,\sigma)\ge c\}}.
\]
Thus $p$ says that the fitted lognormal model implies a $99\%$ expected-shortfall
breach.

Because $D$ and $\mathcal{M}$ are fixed in this special case, only
$\theta=(\mu,\sigma)$ varies. Define the parameter-neighbourhood
distance by 
\[
d\bigl((\mu,\sigma),(\mu',\sigma')\bigr)^{2}=\left(\frac{\mu-\mu'}{\beta_{\mu}}\right)^{2}+\left(\frac{\sigma-\sigma'}{\beta_{\sigma}}\right)^{2}.
\]
The assurance relation is 
\[
\gamma_{K}(w,v)=1\quad\Longleftrightarrow\quad d(w,v)\le1,
\]
and $\gamma_{K}(w,v)=0$ otherwise, where 
\[
w=(\mu,\sigma),\qquad v=(\mu',\sigma').
\]
In this crisp case, 
\[
Kp(w)=1\quad\Longleftrightarrow\quad p(v)=1\text{ for every admissibly nearby }v,
\]
and 
\[
\Diamond_{K}p(w)=1\quad\Longleftrightarrow\quad p(v)=1\text{ for some admissibly nearby }v.
\]
Since the example is crisp Boolean, 
\[
\Diamond_{K}p=\overline{K}p.
\]

Figure~\ref{fig:model-risk-modal-regions} plots these regions for
\[
\alpha=0.99,\qquad c=100,\qquad\beta_{\mu}=0.10,\qquad\beta_{\sigma}=0.045.
\]

\begin{figure}[h]
\centering{}\centering \includegraphics[width=0.75\textwidth]{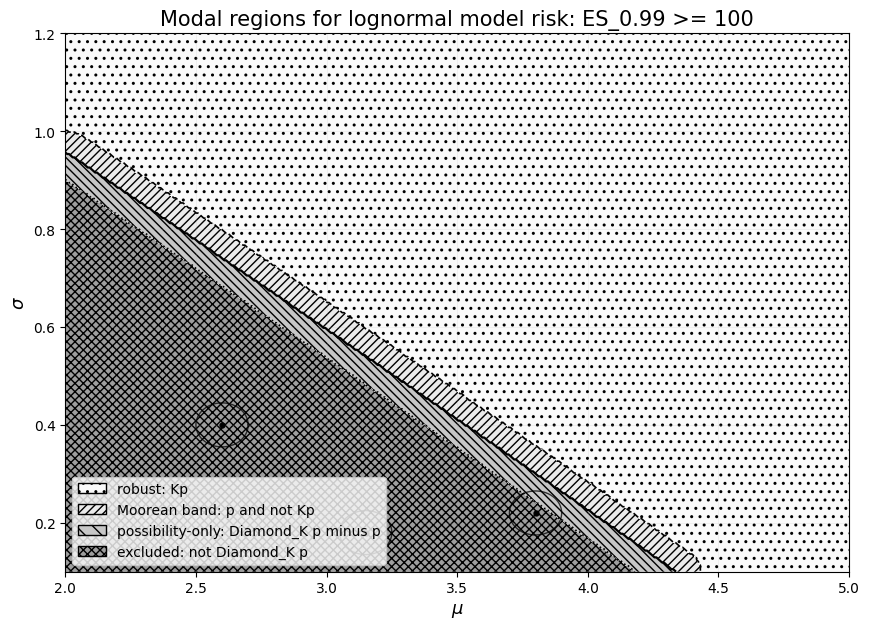}
\caption{Modal regions for the lognormal model-risk example. The robust region
is $Kp$, the Moorean band is $p\wedge\neg Kp$, the possibility-only
region is $\Diamond_{K}p\setminus p$, and the excluded region is
$\neg\Diamond_{K}p$.}
\label{fig:model-risk-modal-regions}
\end{figure}

The Moorean band is the main governance warning. It consists of parameter
states where the tail-loss breach occurs, but the conclusion is not
robust under admissible parameter perturbations. The possibility-only
band is also important: the current parameter does not breach the
threshold, but a nearby admissible parameter does. This may justify
challenger modelling, sensitivity analysis, model overlays, or temporary
model-use restrictions.

This example separates the model output from its institutional standing.
The proposition $p$ records whether the fitted model breaches the
threshold. The formula $Kp$ records whether the breach is robust
enough for assurance-grade use. The formula $\Diamond_{K}p$ records
whether breach remains live under admissible variation. Probability
estimates alone do not encode these distinctions. 

\subsection{Banking-network contagion: non-factive working commitment}

\label{subsec:contagion}

The banking-network example illustrates why the belief-like operator
$B$ should not be factive. A supervisor or risk committee may adopt
a working commitment to a severe-risk scenario even when the adverse
state is not occurring in the actual world.

Let 
\[
W=\{w_{0},w_{1},w_{2}\},
\]
where $w_{0}$ is the actual banking state and $w_{1},w_{2}$ are
supervisory stress states. Let 
\[
p=\text{``a systemic cascade occurs.''}
\]
Suppose 
\[
p(w_{0})=0,\qquad p(w_{1})=1,\qquad p(w_{2})=1.
\]
Thus no cascade occurs at the actual state, but a cascade occurs in
both stress-relevant states.

Define the belief evidence relation at $w_{0}$ by 
\[
\gamma_{B}(w_{0},w_{0})=0,
\]
and 
\[
\gamma_{B}(w_{0},w_{1})=1,\qquad\gamma_{B}(w_{0},w_{2})=1.
\]
Equivalently, 
\[
\Gamma_{B}(w_{0})=\{w_{1},w_{2}\}.
\]
The actual state $w_{0}$ is not included in the $B$-evidence set.
This is the formal reason $B$ is not factive in this example.

Using the G\"odel implication, 
\[
(Bp)(w_{0})=\inf_{v\in W}\bigl(\gamma_{B}(w_{0},v)\Rightarrow p(v)\bigr).
\]
Since the only fully relevant $B$-worlds are $w_{1}$ and $w_{2}$,
and $p$ holds at both of them, 
\[
(Bp)(w_{0})=1.
\]
But 
\[
p(w_{0})=0.
\]
Therefore 
\[
(Bp)(w_{0})>p(w_{0}),
\]
so $B$ is non-factive.

This is not a defect of the model. It is the intended behaviour of
a precautionary working-commitment operator. The institution is not
claiming that a systemic cascade is actually occurring. Rather, it
is treating the cascade scenario as action-guiding because it occurs
throughout the relevant supervisory stress set.

This captures a common feature of systemic-risk governance. A regulator,
central bank, or risk committee may act on a contagion concern before
the contagion materialises. It may require liquidity buffers, restrict
distributions, demand stress testing, or intensify supervision. These
actions do not require the assertion that the adverse state is actual.
They require a disciplined commitment to the scenario as relevant
for decision-making.

The example also clarifies the difference between $K$ and $B$. If
the operator were $K$, factivity would require the actual world to
be answerable to the knowledge claim. Since $p(w_{0})=0$, one could
not have $Kp(w_{0})=1$. But $B$ is designed for a different governance
role. It permits action under incomplete assurance, especially where
severity is high and delay is costly.

The important distinction is therefore: 
\[
Kp:\quad\text{the institution certifies that \ensuremath{p} is correct;}
\]
\[
Bp:\quad\text{the institution is licensed to act as if \ensuremath{p} is decision-relevant.}
\]
In banking-network contagion, this distinction is essential. Systemic
crises often develop through feedback loops, liquidity hoarding, interbank
exposures and correlated funding stress. Waiting for factive knowledge
may mean waiting until the cascade has already begun. The belief-like
operator $B$ represents the governance space between ignorance and
certified knowledge: the space of disciplined precaution.

\begin{figure}
\begin{centering}
\includegraphics[scale=0.7]{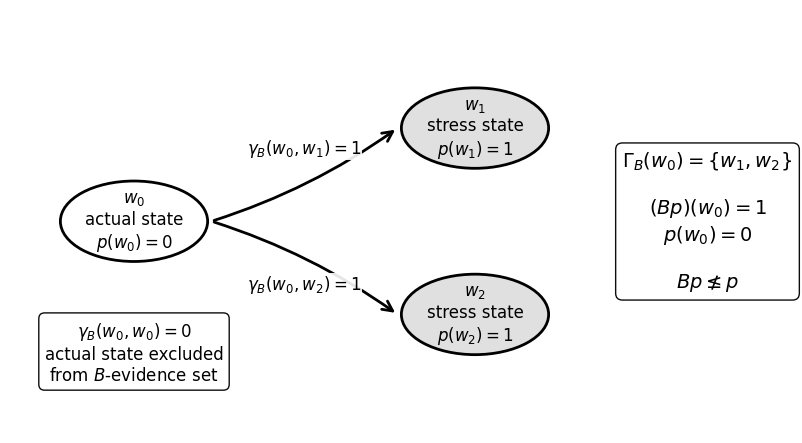}
\par\end{centering}
\caption{A concrete non-factive working-commitment example. The actual state
$w_{0}$ does not satisfy the systemic-cascade proposition $p$, so
$p(w_{0})=0$. However, the $B$-evidence set at $w_{0}$ excludes
the actual state and contains the two supervisory stress states $w_{1}$
and $w_{2}$, both of which satisfy $p$. Hence $(Bp)(w_{0})=1$ while
$p(w_{0})=0$, showing that $B$ is non-factive: $Bp\nleq p$. The
operator $B$ therefore represents a precautionary working commitment
rather than truth-entailing assurance.}
\end{figure}

The banking-network example draws on work on financial contagion,
clearing networks, and systemic risk \cite{AllenGale2000Financial,EisenbergNoe2001Systemic,ElliottGolubJackson2014Financial,AcemogluOzdaglarTahbazSalehi2015Systemic,GlassermanYoung2016Contagion}.
In that literature, shocks propagate through exposure networks. The
present framework abstracts from the detailed clearing mechanism and
represents supervisory relevance by the evidence profile $\gamma_{B}(w,v)$.

\subsection{A simplified two-dimensional flood model}

\label{subsec:flood}

The flood example shows how the same modal framework applies to a
physical risk system. Here the object-level claim concerns whether
a location is flood-critical. The epistemic question concerns whether
that classification is robust under nearby hydrological or infrastructural
states.

Let a world be a flood-risk state $w=(X,Y),$ where $X$ is rainfall
intensity and $Y$ is drainage blockage. Let $F:W\to\mathbb{R}$ be
a hydraulic stress index. It may represent overtopping pressure, expected
inundation, drainage overload, or a simplified combined stress score.
Let $c$ be a critical threshold. Define the crisp flood proposition
\[
p(w)=\begin{cases}
1, & F(w)\ge c,\\
0, & F(w)<c.
\end{cases}
\]
Thus $p$ says that the state is flood-critical.

Now define an evidence relation using a weighted tolerance ball. For
\[
w=(X,Y),\qquad v=(X',Y'),
\]
let 
\[
d(w,v)^{2}=a_{X}|X-X'|^{2}+a_{Y}|Y-Y'|^{2},
\]
where $a_{X},a_{Y}>0$. The weights encode the relative importance
of rainfall and drainage blockage. For a tolerance level $\beta>0$,
define 
\[
\gamma_{K}(w,v)=\begin{cases}
1, & d(w,v)\le\beta,\\
0, & d(w,v)>\beta.
\end{cases}
\]
The corresponding evidence neighbourhood is 
\[
\Gamma_{K}(w)=\{v\in W:d(w,v)\le\beta\}.
\]

In the crisp case, 
\[
Kp(w)=1\Longleftrightarrow p(v)=1\text{ for all }v\in\Gamma_{K}(w),
\]
and 
\[
\Diamond_{K}p(w)=1\Longleftrightarrow p(v)=1\text{ for some }v\in\Gamma_{K}(w).
\]
Thus $Kp$ says that flood-criticality is robust throughout the evidence
neighbourhood. By contrast, $\Diamond_{K}p$ says that flood-criticality
occurs somewhere in that neighbourhood.

The Moorean diagnostic 
\[
p\wedge\neg Kp
\]
then says that the actual state is flood-critical, but the flood classification
is not robust. This occurs near the boundary of the flood region.
At such a state, flooding occurs, but a nearby admissible change in
rainfall intensity, drainage blockage, saturation, tide, or calibration
may move the state outside the flood-critical region.

To visualise the structure, take 
\[
W=[0,1]^{2},
\]
where $x$ denotes rainfall intensity and $y$ denotes drainage blockage.
Let 
\[
F(x,y)=0.95x^{2}+0.75y^{2}+0.85xy+0.10\sin(2.5\pi x)\sin(2\pi y).
\]
Define 
\[
A:=\{(x,y)\in W:F(x,y)\ge c\}.
\]
Then $A$ is the flood-critical region. The proposition $p$ is the
indicator of $A$.

For a radius $\beta$, the evidence neighbourhood of $w$ is 
\[
\Gamma_{K}(w)=\{v\in W:d(w,v)\le\beta\}.
\]
Then 
\[
Kp(w)=1\Longleftrightarrow\Gamma_{K}(w)\subseteq A.
\]
So $Kp$ is the robust interior of the flood region. A point deep
inside $A$ satisfies $Kp$, because all nearby admissible states
are also flood-critical.

Similarly, 
\[
\Diamond_{K}p(w)=1\Longleftrightarrow\Gamma_{K}(w)\cap A\neq\varnothing.
\]
So $\Diamond_{K}p$ is the outer expansion of the flood region. A
point outside $A$ may still satisfy $\Diamond_{K}p$ if a nearby
admissible state is flood-critical.

This gives the geometric relation 
\[
Kp\subseteq p\subseteq\Diamond_{K}p.
\]
The region $Kp$ contains states where flood-criticality is robust.
The region $p\wedge\neg Kp$ contains states where flooding occurs
but is not robust. The region $\Diamond_{K}p\setminus p$ contains
states where flooding does not occur at the actual state but remains
live nearby. The Moorean region is 
\[
\{(x,y)\in W:(p\wedge\neg Kp)(x,y)=1\}.
\]
It consists of flood states that are not robust enough to be known
under the evidence relation. Geometrically, it is the band inside
$p$ but outside $Kp$: the actual-flood-but-not-known region.

\begin{figure}
\begin{centering}
\includegraphics[scale=0.5]{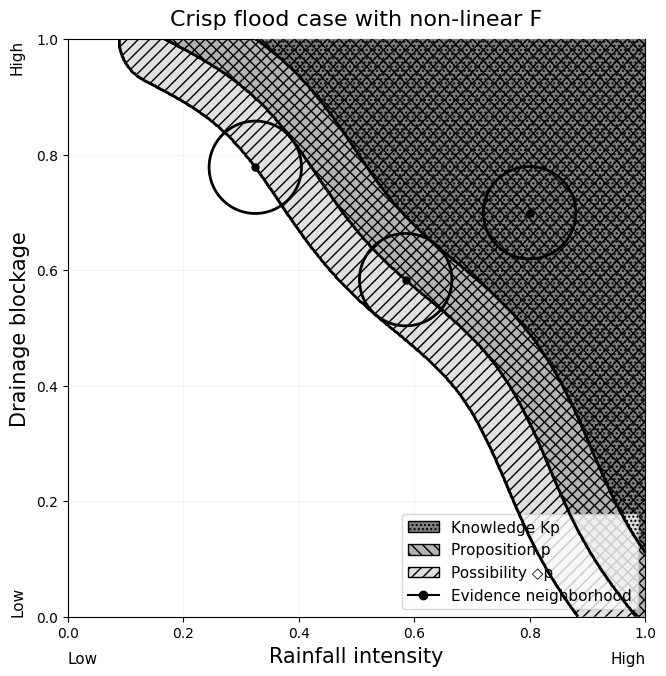}
\par\end{centering}
\caption{A two-dimensional crisp flood-risk example with a non-linear flood-stress
function $F$. The horizontal axis represents rainfall intensity and
the vertical axis represents drainage blockage. The proposition $p$
is the set of states where $F(x,y)\ge c$, i.e. the flood-critical
region. The knowledge region $Kp$ is the robust interior of $p$:
at every point in $Kp$, the entire evidence neighbourhood remains
inside $p$. The possibility region $\Diamond p$ is the outer expansion
of $p$: at every point in $\Diamond p$, the evidence neighbourhood
intersects $p$. The circles illustrate evidence neighbourhoods. A
circle fully contained in $p$ corresponds to knowledge-like support;
a circle centred in $p$ but crossing outside $p$ lies in the Moorean
region $p\wedge\neg Kp$; and a circle centred outside $p$ but intersecting
$p$ represents live possibility without actual flood-criticality.}
\end{figure}

A point deep inside $Kp$ represents a state where all nearby admissible
states are also flood states. A point in $p\wedge\neg Kp$ represents
a state where flooding occurs, but some nearby admissible states leave
the flood region. Finally, a point in $\Diamond_{K}p\setminus p$
represents a state where flooding does not occur at the actual point,
but remains live in a nearby admissible state.

\begin{figure}
\begin{centering}
\includegraphics[scale=0.5]{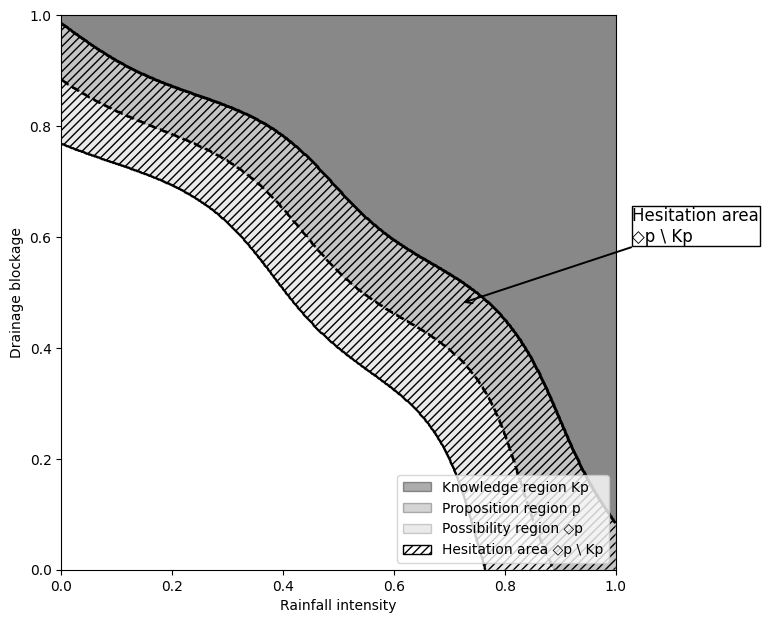}
\par\end{centering}
\caption{Depicting the hesitation area for simplified flood example.}
\end{figure}

The flood-risk example is motivated by hydrological and hydraulic
modelling, where flood risk depends on interacting variables such
as rainfall, saturation, runoff, blockage, freeboard, tide, and climate
state \cite{ChowMaidmentMays1988Applied,Beven2012Rainfall,HallDawsonSayers2003Integrated,BatesEtAl2010LISFLOOD}.
The crisp tolerance-ball formulation is a simple version of the same
idea used in fuzzy evidence models: nearby hydrological or infrastructural
states are treated as relevant alternatives for assessing robustness
and live possibility.

The hesitation region is 
\[
T(\Diamond_{K}p)\setminus T(Kp).
\]
It consists of states where the institution cannot make a fully robust
classification. Some of these states are actually flood-critical but
not robustly so. Others are not flood-critical but close enough to
the flood boundary that the risk cannot responsibly be ruled out.

This example gives a concrete interpretation of the epistemic operators:
\[
p:\quad\text{the state is flood-critical;}
\]
\[
Kp:\quad\text{the flood-critical classification is robust;}
\]
\[
\Diamond_{K}p:\quad\text{flood-criticality remains live nearby;}
\]
\[
p\wedge\neg Kp:\quad\text{the state is flood-critical but epistemically fragile;}
\]
\[
\Diamond_{K}p\wedge\neg Kp:\quad\text{flood-criticality lies in the hesitation region.}
\]
The practical meaning is direct. A point in $Kp$ supports formal
endorsement of the flood-critical classification. A point in $p\wedge\neg Kp$
supports sensitivity analysis, drainage inspection, or temporary defensive
action. A point in $\Diamond_{K}p\setminus p$ supports monitoring,
because flooding is not actual at the point but remains live under
admissible perturbation. A point outside $\Diamond_{K}p$ may be closed,
subject to audit justification.

The size of the hesitation region depends on $\beta$. A larger $\beta$
imposes a stricter robustness standard and widens the hesitation region.
A smaller $\beta$ narrows the evidence neighbourhood and makes robust
endorsement easier. The choice of $\beta$ is therefore not a technical
detail alone. It encodes the institution's tolerance for perturbation
when classifying flood risk.

\section{Probability and modal status}

\label{sec:Probability-and-modal}

The semantics developed here is modal rather than probabilistic. This
does not displace probabilistic or Bayesian methods, which remain
indispensable in QRM. The point is more precise: probability and modal
status answer different governance questions about the same risk proposition. 

\subsection{Probabilistic readings and evidential structure}

Probability asks how likely $p$ is, or how much of the relevant state
space supports $p$. The modal-evidential framework asks whether $p$
is endorsed, live, excluded, fragile, or unresolved under the institution's
evidence relation. Neither question subsumes the other.

Let $p\in\mathrm{Risk}(S)$. At a state $w$, the modal-evidential
framework distinguishes three governance statuses.
\begin{itemize}
\item First, $p$ may be endorsed: $Mp(w)$ is high because $p$ is supported
throughout the evidence field around $w$.
\item Second, $p$ may be live but not endorsed: $\Diamond_{M}p(w)$ is
high while $Mp(w)$ is low.
\item Third, $p$ may be excluded: $\Diamond_{M}p(w)$ is low because no
relevant alternative supports $p$ to a significant degree.
\end{itemize}
These statuses depend on which alternatives matter, not only on how
many alternatives support $p$. A probability measure $\mu$ aggregates
over states and returns a scalar. Two propositions may therefore have
the same probability while occupying different structural positions
in the evidence field.
\begin{example}[Equal probability, divergent modal status]
Let 
\[
W=\{w,u_{1},u_{2},u_{3},u_{4}\},
\]
with uniform measure $\mu(v)=1/5$ for every $v\in W$. Let the crisp
evidence set at $w$ be 
\[
\Gamma_{K}(w)=\{w,u_{1},u_{2}\}.
\]
Define 
\[
T(p)=\{w,u_{1},u_{2}\},\qquad T(q)=\{w,u_{3},u_{4}\}.
\]
Then 
\[
\mu(T(p))=3/5=\mu(T(q)).
\]
However, 
\[
Kp(w)=1,\qquad Kq(w)=0,
\]
because $\Gamma_{K}(w)\subseteq T(p)$, while $\Gamma_{K}(w)\not\subseteq T(q)$.
Moreover, 
\[
\Diamond_{K}q(w)=1,
\]
since $w\in\Gamma_{K}(w)\cap T(q)$. Thus $p$ is robustly endorsed
and may support sign-off, while $q$ is merely live and calls for
further investigation, despite the two propositions having identical
probability. 
\end{example}

The same point applies when probability is high. Let $|W|=10$, let
$\mu$ be uniform, and let 
\[
T(r)=W\setminus\{v_{2}\}.
\]
Then $\mu(T(r))=0.9$. If the evidence set at $w$ includes $v_{2}$,
then 
\[
\mu(T(r))=0.9,\qquad Kr(w)=0.
\]
A model may therefore produce favourable estimates with high probability
across a broad prior while remaining fragile under a specific nearby
alternative specification. Probability alone does not reveal this
local fragility; the modal operator does.

The Moorean expression $p\wedge\neg Mp$ has a natural probabilistic
reading: 
\[
\mathbb{P}(p\wedge\neg Mp)=\mu\bigl(T(p)\setminus T(Mp)\bigr).
\]
This measures the prevalence of states in which $p$ holds without
endorsement. It is a useful aggregate statistic, but it is not the
governance meaning of the diagnostic.

At a state $w$, the crisp diagnostic says 
\[
w\models p\wedge\neg Kp\quad\Longleftrightarrow\quad w\in T(p)\text{ and }\Gamma_{K}(w)\not\subseteq T(p).
\]
Thus $p$ is true at the actual world, but the evidence neighbourhood
extends beyond the truth set of $p$. In risk language, the risk obtains,
but the claim is not stable across the institution's admissible alternatives.

This is a claim about the geometry of the evidence field relative
to the truth boundary of $p$, not merely about the measure of a subset
of $W$. It cannot be recovered from $\mu(T(p))$, from $\mu(T(Mp))$,
or from any scalar function of $\mu$ alone. The same aggregate prevalence
of $T(p)\setminus T(Mp)$ may arise from a nearby model alternative,
a remote tail scenario, or a systematic gap in the evidence architecture.

To connect the two perspectives, let $\mu$ be a probability measure
on $W$, and let $p:W\to[0,1]$ be a graded proposition. In the crisp
case, $p=\mathbf{1}_{A}$ for $A:=\{w\in W:p(w)=1\}$, and 
\[
\mathbb{P}(p):=\mu(A)=\int_{W}p(w)\,d\mu(w).
\]
In the graded case, $p$ may be treated as a fuzzy event, with probability
given by expectation: 
\[
\mathbb{P}(p):=\int_{W}p(w)\,d\mu(w).
\]
This generalises the crisp case because 
\[
\int_{W}\mathbf{1}_{A}\,d\mu=\mu(A).
\]
Alternatively, for any threshold $\eta\in[0,1]$, one may form the
level event 
\[
A_{\eta}(p):=\{w\in W:p(w)\ge\eta\},
\]
and consider $\mu(A_{\eta}(p))$. The expectation measures average
degree of satisfaction; the level probability measures the probability
that $p$ holds to at least degree $\eta$. This connects the present
framework to standard treatments of fuzzy events and probabilities
\cite{Zadeh1968ProbabilityFuzzyEvents,Zadeh1965FuzzySets}.

A sharper comparison is obtained by restricting probability to the
evidence neighbourhood. For each $w\in W$, let $\mu_{w}$ be a probability
measure supported on $\Gamma_{M}(w)$. Define the local probability
of $p$ at $w$ by 
\[
\rho_{p}(w):=\int_{W}p(v)\,d\mu_{w}(v).
\]
If $p=\mathbf{1}_{A}$, then $\rho_{p}(w)=\mu_{w}(A)$, the conditional
probability of $A$ within the evidence field at $w$.

Under a full-support local measure on $\Gamma_{M}(w)$, modal and
probabilistic statuses are related by limiting cases: 
\[
Mp(w)=1\Longrightarrow\rho_{p}(w)=1,
\]
and 
\[
\Diamond_{M}p(w)=1\Longrightarrow\rho_{p}(w)>0.
\]
The Moorean diagnostic also has a local probabilistic reading. In
the crisp case, 
\[
w\models p\wedge\neg Mp\quad\Longleftrightarrow\quad p(w)=1\text{ and }\rho_{p}(w)<1.
\]
Equivalently, $p(w)=1$ and $\mu_{w}(T(\neg p))>0$: the actual world
satisfies $p$, but the local evidence neighbourhood assigns positive
probability to $\neg p$.

Even so, the modal reading remains stronger for governance purposes.
The condition $Mp(w)=1$ requires every relevant alternative to support
$p$. A high local probability, such as $\rho_{p}(w)=0.9$, does not
yield $Mp(w)=1$. For assurance-grade certification, this difference
is material: admissible counterexamples may block sign-off even when
they barely affect the local average.
\begin{rem}
[Low probability, high modal relevance] The framework does not identify
probability-zero exceptions with impossibility. In a continuous risk
model, a measure-zero set may still contain the nearest stress scenario,
the most fragile model specification, or the most decision-relevant
tail state. This is one reason modal accounts of epistemic risk are
not reducible to probability assignments alone \cite{Pritchard2016EpistemicRisk,Hawthorne2003Knowledge,Babic2019TheoryEpistemicRisk}.
Pritchard's distinction between probability and modal closeness is
especially apt: a low-probability event may be modally close if only
a small perturbation is required to reach it \cite{Pritchard2016EpistemicRisk}. 
\end{rem}

One might try to approximate the infimum and supremum in the definitions
of $M$ and $\Diamond_{M}$ by integration against a local probability
measure $\mu_{w}$. Define 
\[
(M^{\mathrm{agg}}p)(w):=\int_{W}\bigl(\gamma_{M}(w,v)\Rightarrow p(v)\bigr)\,d\mu_{w}(v),
\]
and 
\[
(\Diamond_{M}^{\mathrm{agg}}p)(w):=\int_{W}\bigl(\gamma_{M}(w,v)\otimes p(v)\bigr)\,d\mu_{w}(v).
\]
These are meaningful scores: the first measures average implicational
support, and the second measures average live compatibility with $p$.
They may be useful robustness indices. But they do not replace modal
operators, because they need not satisfy the structural properties
required for knowledge-like or belief-like governance.
\begin{prop}
[Structural failures of aggregated operators] The aggregated operators
$M^{\mathrm{agg}}$ and $\Diamond_{M}^{\mathrm{agg}}$ may fail each
of the following properties, even when the underlying evidence relation
$\gamma_{M}$ is reflexive.
\begin{itemize}
\item \textbf{Factivity.} $M^{\mathrm{agg}}p\le p$ may fail. Averaging
allows support from other worlds to compensate for failure at the
actual world. By contrast, the infimum-based operator satisfies factivity
under reflexivity because the actual-world term cannot be averaged
away: 
\[
(Mp)(w)\le\gamma_{M}(w,w)\Rightarrow p(w)=1\Rightarrow p(w)=p(w).
\]
\item \textbf{Non-exclusion.} $\Diamond_{M}^{\mathrm{agg}}p(w)>0$ may fail
even when $p$ is live at a relevant alternative. If the only relevant
$p$-world has $\mu_{w}$-measure zero, as may occur in continuous
models, then $\Diamond_{M}^{\mathrm{agg}}p(w)=0$ even though $\Diamond_{M}p(w)>0$.
\item \textbf{Conjunction closure.} For a threshold $\eta$, one may have
$\mathbb{P}(p)\ge\eta$ and $\mathbb{P}(q)\ge\eta$ while $\mathbb{P}(p\wedge q)<\eta$.
This is the lottery-style and preface-style problem for high-probability
belief \cite{Kyburg1961Probability,Makinson1965Preface,Foley1992WorkingWithoutNet,Christensen2004PuttingLogic}.
The same failure applies to averaged evidence scores.
\item \textbf{KD45-style conditions.} Consistency, positive introspection,
and negative introspection correspond to seriality, transitivity,
and Euclideanness of the accessibility relation \cite{Chellas1980Modal,BlackburnDeRijkeVenema2001Modal,FaginHalpernMosesVardi1995Reasoning,MeyerVanDerHoek1995Epistemic}.
A probability measure over worlds does not by itself impose these
frame conditions.
\end{itemize}
\end{prop}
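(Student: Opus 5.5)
The claim is that these properties \emph{may} fail. So the proof is a sequence of explicit counterexamples, one per bullet, each built on a reflexive evidence relation. I will use the G\"odel/min package of Section~\ref{sec:examples} and crisp relations wherever possible, so that the infimum-based operators reduce to the conjunction/disjunction formulas of the crisp reduction. The aggregated operators then become plain weighted averages over $\Gamma_M(w)$.

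\emph{Factivity.} Take $W=\{w_0,w_1\}$ with $\gamma_M\equiv1$, which is reflexive. Let $\mu_{w_0}$ put weights $1-\varepsilon$ on $w_1$ and $\varepsilon$ on $w_0$, and let $p(w_0)=0$, $p(w_1)=1$. Since $1\Rightarrow a=a$, we get $(M^{\mathrm{agg}}p)(w_0)=1-\varepsilon>0=p(w_0)$. The contrast with the infimum operator is just the factivity computation already given in Section~\ref{sec:basic-modal-assumptions}, which I will quote.

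\emph{Non-exclusion.} Take $W=[0,1]$ with $\gamma_M\equiv1$, $\mu_w$ equal to Lebesgue measure, and $p=\mathbf 1_{\{1/2\}}$. Then $(\Diamond_M^{\mathrm{agg}}p)(w)=\int\min\{1,p\}\,d\mu_w=0$, while $(\Diamond_Mp)(w)=\sup_v\min\{1,p(v)\}=1$. If one insists on a finite $W$, the same effect comes from any $\mu_w$ that gives zero mass to the unique $p$-world, as the statement allows.

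\emph{Conjunction closure.} I will use a three-ticket lottery. Let $W=\{v_1,v_2,v_3\}$ with $\mu$ uniform and $\gamma_M\equiv1$. Set $p=\mathbf 1_{W\setminus\{v_1\}}$, $q=\mathbf 1_{W\setminus\{v_2\}}$ and $\eta=2/3$. Then $\mathbb P(p)=\mathbb P(q)=2/3$ but $\mathbb P(p\wedge q)=1/3$. Taking $\mu_w=\mu$ for every $w$, the same numbers are values of $M^{\mathrm{agg}}$. So a threshold stance $M^{\mathrm{agg}}(\cdot)\ge\eta$ is not closed under $\wedge$, whereas the infimum operator satisfies $M(p\wedge q)=Mp\wedge Mq$ when $\wedge$ is the meet.

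\emph{KD45-style conditions.} This bullet is the least formal, and I expect it to be the main obstacle, because it is partly a literature claim. To make it precise I will exhibit a failure of positive introspection. Take two worlds, $\gamma_M\equiv1$, and $\mu_{w_0}$ placing mass $1/2$ on each world, with $p=(1,0)$. Then $M^{\mathrm{agg}}p=(a,b)$ for some values in $[0,1]$. Choosing $\mu_{w_1}$ differently (say mass $1$ on $w_1$) gives $b=0$, $a=1/2$, and then $(M^{\mathrm{agg}}M^{\mathrm{agg}}p)(w_0)=1/4<1/2$. For consistency, I would note that the measure family $\{\mu_w\}$ imposes no seriality-type constraint linking $M^{\mathrm{agg}}p$ and $M^{\mathrm{agg}}\neg p$. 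With a fuzzy $\gamma_M$ and the G\"odel implication, both averages can be made large; I would check one numerical instance. Negative introspection would be refuted by a similar two-world instance.

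Finally I will cite the frame-correspondence references for the general remark that $\mu$ carries no relational frame conditions.
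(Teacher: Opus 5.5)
Your proposal is correct and follows the paper's proof sketch bullet by bullet. The factivity and measure-zero non-exclusion counterexamples are the paper's own (your $\varepsilon$-weights generalise its uniform $\mu_w$), while your three-ticket lottery and explicit two-world failure of $M^{\mathrm{agg}}p\le M^{\mathrm{agg}}M^{\mathrm{agg}}p$ make concrete what the paper leaves as ``standard lottery constructions'' and a remark that frame conditions are not consequences of a measure. One small caution: your positive-introspection instance does not itself refute negative introspection, since there $M^{\mathrm{agg}}\neg M^{\mathrm{agg}}p=(3/4,1)\ge(1/2,1)=\neg M^{\mathrm{agg}}p$; concentrating $\mu_{w_1}$ on $w_0$ instead gives $M^{\mathrm{agg}}p=(1/2,1)$ and $(M^{\mathrm{agg}}\neg M^{\mathrm{agg}}p)(w_0)=1/4<1/2$.
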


\begin{proof}[Proof sketch]
For factivity, let $W=\{w,u\}$, let $\mu_{w}$ be uniform, let $\gamma_{M}(w,w)=\gamma_{M}(w,u)=1$,
and let $p(w)=0$, $p(u)=1$. Using the G\"odel implication, 
\[
(M^{\mathrm{agg}}p)(w)=\frac{1}{2}(0+1)=\frac{1}{2},
\]
but $p(w)=0$. For non-exclusion, in a continuous state space let
$v^{*}$ be the unique $p$-supporting state in $\Gamma_{M}(w)$,
with $\mu_{w}(\{v^{*}\})=0$. Then $\Diamond_{M}p(w)>0$, but $\Diamond_{M}^{\mathrm{agg}}p(w)=0$.
For conjunction closure, standard lottery constructions apply. Finally,
KD45-style properties are structural properties of the accessibility
relation, not consequences of probability measures. 
\end{proof}
Aggregated operators are therefore useful as scores, such as average
robustness, expected local support, or weighted live compatibility.
But they are not substitutes for modal operators, because they lack
the structural properties of factivity, non-exclusion, closure, and
introspection on which the governance architecture depends.

\subsection{Probability and modal status jointly}

Probability and modal status are complementary. Probability answers
prevalence questions: 
\[
\mathbb{P}(p):\text{ how likely is the risk?}
\]
\[
\mathbb{P}(p\wedge\neg Mp):\text{ how prevalent is the Moorean region?}
\]
\[
\rho_{p}(w):\text{ what is the local probability within the evidence neighbourhood?}
\]
Modal status answers structural questions: 
\[
Mp(w):\text{ is the risk robustly endorsed across all relevant alternatives?}
\]
\[
\Diamond_{M}p(w):\text{ is the risk live somewhere in the evidence field?}
\]
\[
p(w)\wedge\neg Mp(w):\text{ is the risk present but not robust at this state?}
\]
\[
H_{M}(p)(w):\text{ what is the gap between what can be endorsed and what cannot be excluded?}
\]

The two dimensions cross-classify risk states into four governance-relevant
quadrants: 
\[
\begin{array}{c|cc}
 & \rho_{p}(w)\text{ high} & \rho_{p}(w)\text{ low}\\
\hline Mp(w)\text{ high} & \text{robust, likely} & \text{robust locally, unlikely}\\
Mp(w)\text{ low} & \text{likely but fragile} & \text{unlikely and unsupported}
\end{array}
\]

The first quadrant, robust and likely, describes a claim supported
both structurally and probabilistically. It can bear assurance-grade
endorsement: sign-off, regulatory disclosure, and binding controls.

The second quadrant, robust locally but unlikely, describes a structurally
stable tail scenario. It may occur rarely, but if severity is high,
it may still justify capital allocation or contingency planning.

The third quadrant, likely but fragile, is the most governance-critical.
The risk has high aggregate or local probability, but the claim is
not robust under relevant perturbations. A model may report favourable
tail estimates with high frequency while a nearby alternative model
specification, stress calibration, or data shift breaks the conclusion.
This quadrant contains false reassurance: probability looks comforting,
while modal structure reveals fragility. Governance responses then
is warranted.

The fourth quadrant, unlikely and unsupported, describes a risk with
low probability and no robust endorsement. Non-exclusion may still
matter: if $\Diamond_{M}p>0$ and severity is high, the item may require
monitoring or precaution. If $\Diamond_{M}p$ is also low, the item
may be closed with audit justification.

Thus probability alone cannot flag the likely-but-fragile quadrant.
A high value of $\mathbb{P}(p)$ or $\rho_{p}(w)$ may provide false
assurance when the evidence neighbourhood is inhomogeneous. The modal
operator $Mp$ is designed to detect precisely this inhomogeneity.
This is why $p\wedge\neg Mp$ is a governance tool, not merely a logical
curiosity.
\begin{rem}
[Fragility in model risk] The likely-but-fragile quadrant corresponds
to a common failure mode in model risk governance. A model may pass
standard back-testing and validation checks while remaining sensitive
to small changes in model class, calibration window, or data vintage.
SR 11-7 model-risk guidance \cite{FederalReserveOCC2011SR117} treats
development, validation, implementation, and use as distinct control
points precisely because aggregate performance may conceal local fragility.
The modal operator $Mp$ formalises this concern: it asks not whether
the model usually works, but whether the conclusion is stable under
the admissible perturbations encoded by the evidence relation. 
\end{rem}

\begin{rem}
[The role of severity] Governance responses in the second and fourth
quadrants are severity-dependent. The modal framework captures this
through non-exclusion: $\Diamond_{M}p(w)>0$ means that $p$ remains
live somewhere in the evidence field, regardless of its aggregate
probability. For high-severity risks, such as systemic contagion,
catastrophic flood, or critical infrastructure failure, even weak
non-exclusion may justify precautionary action. This is the formal
analogue of the precautionary principle: action may be warranted by
the inability to rule out a severe outcome, even without high-probability
endorsement \cite{Sunstein2005Laws,Stirling2007Risk,Hansson2004Philosophical}. 
\end{rem}

\subsubsection{Probabilistic reading of the flood example}

\label{sec:flood:prob}

The simplified flood model gives a concrete geometric realisation
of that classification. Letting $Z=(X,Y)$ be a random vector on $W=[0,1]^{2}$,
where $X$ denotes rainfall intensity and $Y$ denotes drainage blockage,
we denote the probability by $\mu$. The flood event is 
\[
A:=\{(x,y)\in W:F(x,y)\ge c\}.
\]
Equivalently, with $p(w)=\mathbf{1}_{A}(w)$, 
\[
\mathbb{P}(A)=\int_{W}p(w)\,d\mu(w)=\int_{W}\mathbf{1}_{\{F(w)\ge c\}}\,d\mu(w).
\]
If $\mu$ has density $f_{X,Y}$, this becomes 
\[
\mathbb{P}(A)=\int_{0}^{1}\int_{0}^{1}\mathbf{1}_{\{F(x,y)\ge c\}}f_{X,Y}(x,y)\,dx\,dy.
\]
In the special case where $\mu$ is uniform on $W=[0,1]^{2}$, $\mathbb{P}(A)=\lambda_{2}(A)$,
where $\lambda_{2}$ denotes two-dimensional Lebesgue measure.

This probabilistic quantity answers the first-order question: \emph{how
likely is flooding?} The modal operators answer a different question,
and the two together generate the governance quadrants described earlier.

For each state $w$, let $\mu_{w}$ be a probability measure supported
on $\Gamma_{K}(w)$. For concreteness, take $\mu_{w}$ to be the uniform
distribution on $\Gamma_{K}(w)\cap W$. Define the local flood probability
\[
\rho_{A}(w):=\mu_{w}(A)=\int_{\Gamma_{K}(w)}\mathbf{1}_{A}(v)\,d\mu_{w}(v).
\]
In the crisp setting, the modal statuses and local probability are
related by limiting cases: 
\[
Kp(w)=1\;\Longrightarrow\;\rho_{A}(w)=1,\qquad\text{and}\qquad\Diamond_{K}p(w)=1\;\Longleftrightarrow\;\rho_{A}(w)>0,
\]
provided $\mu_{w}$ assigns positive mass to every relevant open part
of $\Gamma_{K}(w)$. The key point is that the converse of the first
implication fails: $\rho_{A}(w)$ can be high without reaching $1$,
and in that case $Kp(w)=0$.

\paragraph{The four quadrants in the flood plane.}

The two-dimensional flood model partitions $W=[0,1]^{2}$ into four
governance-relevant regions, corresponding to the quadrants:
\begin{enumerate}
\item \emph{Robust and likely} ($Kp(w)=1$, $\rho_{A}(w)=1$). Points deep
inside the flood region, far from the boundary of~$A$. The entire
evidence ball lies within~$A$. Flood-criticality is both certain
within the evidence neighbourhood and supported by high aggregate
probability. Governance response: endorse the flood classification,
activate binding controls, report for regulatory purposes.
\item \emph{Robust locally but unlikely} ($Kp(w)=1$, $\mathbb{P}(A)$ low).
This quadrant is realised when the flood region $A$ is small in area
(low global probability), but the state $w$ lies deep inside $A$
so that $\Gamma_{K}(w)\subseteq A$. Flood-criticality is structurally
stable at $w$ even though $A$ is globally rare. Governance response:
severity-dependent. If the flood consequence is severe, the structural
robustness at~$w$ justifies capital allocation and contingency planning
despite low global frequency.
\item \emph{Likely but fragile} ($Kp(w)=0$, $\rho_{A}(w)$ high). Points
inside $A$ but near the boundary, where most of the evidence ball
lies within $A$ but a sliver extends outside. Flood-criticality holds
at the actual state, and most nearby states agree, but the claim is
not robust: a small perturbation in rainfall, drainage, or model calibration
could remove the state from the flood region. This is the Moorean
band $p\wedge\neg Kp$ with high local probability. This is the most
governance-critical quadrant, because the high local probability may
give a misleading impression of safety. Governance response: sensitivity
analysis on the flood-stress function and drainage inspection. 
\item \emph{Unlikely and unsupported} ($Kp(w)=0$, $\rho_{A}(w)$ low).
Points outside or barely inside $A$ with few flood states in the
evidence neighbourhood. Non-exclusion ($\Diamond_{K}p(w)=1$) may
still hold if any part of $\Gamma_{K}(w)$ intersects $A$. Governance
response: if $\Diamond_{K}p=0$, the flood item may be closed with
justification. If $\Diamond_{K}p=1$, the flood risk remains live
and requires monitoring, especially under high-severity conditions. 
\end{enumerate}

\paragraph{A concrete boundary state.}

To make the third quadrant vivid, consider a specific point $w_{0}=(x_{0},y_{0})$
lying just inside the boundary of $A$, so that $F(x_{0},y_{0})=c+\varepsilon$
for a small $\varepsilon>0$. The evidence ball $\Gamma_{K}(w_{0})$
straddles the boundary: most of its area lies within $A$, but a fraction
extends into the non-flood region $A^{c}$. Suppose the geometry gives
$\rho_{A}(w_{0})=0.92.$ That is, $92\%$ of the evidence neighbourhood
is flood-critical. A probabilistic assessment might treat this as
strong evidence of flood risk. But $Kp(w_{0})=0$, because the remaining
$8\%$ of the evidence ball lies outside $A$. The flood classification
at $w_{0}$ is fragile: a small shift in rainfall intensity or drainage
condition could move the state across the boundary.

This is precisely the situation that the modal operator is designed
to detect. The local probability $\rho_{A}(w_{0})=0.92$ aggregates
over the evidence neighbourhood and returns a scalar. The modal status
$Kp(w_{0})=0$ records that the evidence neighbourhood is not uniformly
supportive. For assurance-grade governance, the latter is the more
important diagnostic: it identifies a state where the flood classification
could easily have been different under a nearby admissible alternative.

\paragraph{Geometric decomposition of the hesitation region.}

The hesitation region $T(\Diamond_{K}p)\setminus T(Kp)$ can now be
decomposed along both the modal and probabilistic dimensions: 
\begin{itemize}
\item The \emph{inner hesitation band} is the Moorean region $p\wedge\neg Kp$.
Here flooding is actual but not robust. Local probability satisfies
$0<\rho_{A}(w)<1$. Points near the interior of $A$ have $\rho_{A}$
close to~$1$ (likely but fragile); points near the boundary of $A$
from the inside have $\rho_{A}$ closer to the area fraction of the
evidence ball that intersects $A$.
\item The \emph{outer hesitation band} is the possibility-only region $\Diamond_{K}p\setminus p$.
Here flooding is not actual but remains live. Local probability satisfies
$0<\rho_{A}(w)<1$ as well, but now $p(w)=0$. These are states where
the institution cannot responsibly exclude flood risk, even though
the actual state is not flood-critical. 
\end{itemize}
The width of each band is controlled by the tolerance radius $\beta$.
A larger $\beta$ (wider evidence neighbourhood, more conservative
robustness standard) produces a wider hesitation region and a narrower
knowledge region. A smaller $\beta$ (tighter evidence standard) shrinks
hesitation and expands knowledge. The choice of $\beta$ is itself
a governance decision: it encodes how much perturbation the institution
considers admissible when certifying a flood classification.

\paragraph{Why the joint picture matters for flood governance.}

In practice, flood governance often begins with the global probability
$\mathbb{P}(A)$ and proceeds to local assessments of return periods,
peak discharges, and scenario likelihoods. The modal framework does
not displace this practice. It adds a structural layer that answers
a question probability leaves open: \emph{is the flood classification
at a given state robust under the admissible perturbations encoded
by the evidence relation?}

The general argument for why this structural layer is needed, and
why it is not reducible to a probability threshold. The flood model
instantiates that argument geometrically. The Moorean band $p\wedge\neg Kp$
is the set of states where the flood is real but fragile. The hesitation
region $T(\Diamond_{K}p)\setminus T(Kp)$ is the set of states where
the institution lacks a decisive epistemic stance. The local probability
$\rho_{A}(w)$ quantifies the degree of local exposure. Together,
these three quantities, modal status, hesitation width, and local
probability, give a richer governance picture than any one of them
alone.

The cross-classification provides the governance rule: 
\begin{itemize}
\item $Kp=1$ and $\rho_{A}$ high: endorse the flood classification. 
\item $Kp=0$ and $\rho_{A}$ high (likely but fragile): escalate, inspect,
deploy temporary defences. 
\item $Kp=0$, $\Diamond_{K}p=1$, and $\rho_{A}$ low: monitor, especially
under high-severity conditions. 
\item $\Diamond_{K}p=0$: close the flood item with audit justification. 
\end{itemize}
The second case, \emph{likely but fragile}, is the one that probability
alone cannot flag as governance-critical, and it is precisely the
case that the Moorean diagnostic $p\wedge\neg Kp$ is designed to
identify. 

\section{Meta-Level Controls as an Epistemic Architecture}

\label{sec:meta}

The Moorean limitation shows that epistemic diagnostics cannot always
be handled as ordinary object-level risk claims. A diagnostic such
as 
\[
p\wedge\neg Kp
\]
says that $p$ obtains while the institution lacks assurance-grade
endorsement of $p$. If the same framework then requires the institution
to know this diagnostic in the ordinary object-level sense, the diagnostic
becomes unstable: knowing $p\wedge\neg Kp$ pushes the institution
toward knowing $p$. The same issue arises for 
\[
p\wedge\neg Bp,
\]
where adopting the diagnostic as an ordinary working commitment threatens
the claim that $p$ is not yet a working commitment.

The response is architectural. Risk governance should separate object-level
risk claims from meta-level epistemic diagnostics. Object-level claims
concern hazards, exposures, controls, losses, and adverse states of
the world. Meta-level diagnostics concern the institution's relation
to those claims: whether they are known, believed, modelled, validated,
excluded, or left unresolved.

The core distinction is therefore 
\[
\text{object-level risk}\qquad\text{versus}\qquad\text{meta-level epistemic status}.
\]
A mature risk programme needs both levels. It must ask whether $p$
obtains, but also whether $p$ is represented in the model architecture,
visible to decision-makers, validated, excluded for defensible reasons,
and assigned to a responsible owner. 

\subsection{Object-level and meta-level risk}

Let 
\[
\Risk_{0}(S)
\]
be the object-level risk family. Its elements are first-order propositions
about the subject $S$. Examples include: 
\[
\text{tail loss exceeds the capital threshold;}
\]
\[
\text{a systemic cascade affects a material fraction of banks;}
\]
\[
\text{a protected flood zone is in a critical overtopping state;}
\]
\[
\text{a control fails under stress;}
\]
\[
\text{a hedge is insufficient under the relevant scenario.}
\]
These propositions concern the world, system, portfolio, network,
model, or infrastructure.

Let 
\[
\Risk_{1}(S)
\]
be the meta-level risk family. Its elements are propositions about
the institution's epistemic relation to object-level claims. Examples
include 
\[
p\wedge\neg Kp,
\]
\[
p\wedge\neg Bp,
\]
\[
\Diamond_{M}p\wedge\neg Mp,
\]
\[
\overline{M}p\wedge\neg Mp,
\]
and 
\[
H_{M}(p)>\eta.
\]

Each has a different governance interpretation. The diagnostic $p\wedge\neg Kp$
says that $p$ obtains but lacks assurance-grade endorsement. This
is a certification gap. The diagnostic $p\wedge\neg Bp$ says that
$p$ obtains but has not been adopted as a working commitment. This
is an action gap. The diagnostic $\Diamond_{M}p\wedge\neg Mp$ says
that $p$ remains live but is not positively supported enough for
endorsement. This is a hesitation or non-exclusion gap. The condition
$H_{M}(p)>\eta$ says that the gap between what the institution can
endorse and what it cannot rule out exceeds the governance tolerance.
This is a hesitation-margin breach.

The key point is that these are not first-order hazards. They are
diagnostics of how the institution is positioned relative to first-order
hazards. They should therefore be governed differently.

\subsection{The audit operator}

Introduce an audit operator 
\[
A:\mathcal{L}\to\mathcal{L}.
\]
The intended reading is 
\[
Ad=\text{diagnostic }d\text{ is audit-visible and governable.}
\]
The operator $A$ should not be identified with $K$ or $B$. These
operators have different functions: 
\[
Kp:\quad\text{certify }p\text{ for assurance-grade use;}
\]
\[
Bp:\quad\text{adopt }p\text{ as a disciplined working commitment;}
\]
\[
Ad:\quad\text{record and govern diagnostic }d.
\]

For example, 
\[
A(p\wedge\neg Kp)
\]
does not say that the institution knows $p\wedge\neg Kp$ as an ordinary
object-level claim. It says that the institution has made the diagnostic
visible to audit or review.

Similarly, 
\[
A(\Diamond_{M}p\wedge\neg Mp)
\]
records that $p$ remains live without endorsement. The point is not
to certify $p$ immediately. The point is to prevent the live-but-unendorsed
status from disappearing from governance view.

A minimal requirement on $A$ is monotonicity: 
\[
d\le e\Rightarrow Ad\le Ae.
\]
If one diagnostic entails another, auditability should not decrease.

A second requirement is persistence: 
\[
Ad\le AAd.
\]
If a diagnostic is audit-visible, then its audit record should itself
remain visible. This prevents unresolved epistemic gaps from being
lost through organisational memory failures.

Unlike $K$, the audit operator should not generally be factive. Auditing
a diagnostic does not certify the object-level truth of the risk claim.
It records that the diagnostic is sufficiently important to be governed.
This distinction is crucial. Many risk-management processes must track
concerns that are unresolved, disputed, partially supported, outside
model scope, or live under stress scenarios.

\subsection{Typed reachability}

The collapse pressure arises from applying one reachability principle
to everything. The correction is to type reachability by level.

For object-level risks, use the object-level reach principle: 
\[
\mathrm{Reach}_{0}:\qquad p\in\Risk_{0}(S)\Rightarrow p\le\Diamond_{M}Mp.
\]
This says that if $p$ is a real and decision-relevant object-level
risk, then an appropriate institutional stance toward $p$ should
be reachable. The stance may be assurance-grade endorsement, working
commitment, or another domain-specific status.

For meta-level diagnostics, use a different principle: 
\[
\mathrm{Reach}_{1}:\qquad d\in\Risk_{1}(S)\Rightarrow d\le\Diamond_{A}A(d).
\]
This says that if $d$ is a meta-level diagnostic, then auditability
of $d$ should be reachable.

Now let 
\[
d=p\wedge\neg Mp.
\]
The untyped framework pushes toward 
\[
M(p\wedge\neg Mp),
\]
which creates Moorean pressure. The typed framework instead requires
\[
A(p\wedge\neg Mp).
\]
This is the central architectural move. A diagnostic of missing endorsement
should be governed by audit, not forced into the endorsement register
whose absence it records.

The typed structure is therefore: 
\[
p\in\Risk_{0}(S)\Rightarrow p\le\Diamond_{M}Mp,
\]
but 
\[
p\wedge\neg Mp\in\Risk_{1}(S)\Rightarrow p\wedge\neg Mp\le\Diamond_{A}A(p\wedge\neg Mp).
\]
This preserves the governance ambition of reachability while avoiding
collapse.

\subsection{A governance rule for meta-level diagnostics}

For each object-level risk $p$ and standard $M\in\{K,B\}$, the meta-level
layer should record: 
\[
Mp:\quad\text{Is \ensuremath{p} positively supported or endorsed?}
\]
\[
\Diamond_{M}p:\quad\text{Does \ensuremath{p} remain live under the evidence relation?}
\]
\[
\overline{M}p:\quad\text{Has \ensuremath{p} not been ruled out?}
\]
\[
H_{M}(p):\quad\text{How large is the gap between endorsement and non-exclusion?}
\]
\[
\mathcal{R}_{M}(p):\quad\text{Does \ensuremath{p} obtain without the relevant institutional stance?}
\]
A governance rule can then be written as 
\[
G:\bigl(p,Mp,\Diamond_{M}p,\overline{M}p,H_{M}(p),\mathcal{R}_{M}(p)\bigr)\mapsto\mathcal{A},
\]
where $\mathcal{A}$ is a set of governance actions. A simple threshold
version is: 
\[
Mp\ge\alpha\Rightarrow\text{endorse }p,
\]
\[
\Diamond_{M}p\ge\beta\text{ and }Mp<\alpha\Rightarrow\text{monitor or escalate }p,
\]
\[
H_{M}(p)\ge\eta\Rightarrow\text{record hesitation and require review},
\]
\[
\mathcal{R}_{M}(p)\ge\delta\Rightarrow\text{open a meta-level audit item}.
\]
The thresholds $\alpha,\beta,\eta,\delta$ are domain-specific. In
high-severity settings, weak non-exclusion may be enough to trigger
monitoring. In lower-severity settings, escalation may require stronger
support.

The important point is that the response to a diagnostic is not uniform.
A live but unsupported risk may require monitoring. A high-hesitation
risk may require review. A true-but-unendorsed risk may require audit.
A strongly supported risk may require endorsement and action. The
modal framework separates these cases.

\subsection{An epistemic package }

The proposed architecture has four layers.

First, the object-level layer contains ordinary risk claims: 
\[
p\in\Risk_{0}(S).
\]
Second, the epistemic-status layer records endorsement, working commitment,
live possibility, non-exclusion, hesitation, and inconsistency: 
\[
Kp,\quad Bp,\quad\Diamond_{M}p,\quad\overline{M}p,\quad H_{M}(p),\quad I_{M}(p).
\]
Third, the diagnostic layer records epistemic defects: 
\[
p\wedge\neg Kp,\qquad p\wedge\neg Bp,\qquad\Diamond_{M}p\wedge\neg Mp,\qquad H_{M}(p)>\eta.
\]
Fourth, the audit layer governs those diagnostics: 
\[
A(p\wedge\neg Kp),\qquad A(p\wedge\neg Bp),\qquad A(\Diamond_{M}p\wedge\neg Mp),\qquad A(H_{M}(p)>\eta).
\]

This typed architecture preserves the practical ambition of risk governance.
Real risks should be brought within the reach of responsible epistemic
processes. But epistemic gaps should not be collapsed into the same
object-level endorsement register that failed to capture them. They
should be made audit-visible, assigned, reviewed, revised, escalated,
or closed with justification.

The final governance message is therefore: 
\[
\text{object-level risks require endorsement or commitment;}
\]
\[
\text{meta-level epistemic gaps require audit and control.}
\]
This is how the framework preserves epistemic diagnostics without
producing Moorean collapse.

\subsection{Belief revision, update, and commitment management}

Working commitments change over time. A risk may be live at one date,
dismissed at another, rediscovered later, and eventually treated as
if it had never been visible. Meta-level controls therefore require
institutional memory and disciplined revision.

Let $B_{t}p$ mean that $p$ is a working commitment at time $t$.
Let $E_{t}$ denote the evidence state at time $t$. There are two
different kinds of change. First, there is revision. Revision occurs
when new evidence conflicts with the current commitment register.
For example, a model committee may have adopted $Bp$, where $p$
says that a model is sufficiently conservative. If challenger validation
supports $\neg p$, the institution must revise its working commitments.

Second, there is update. Update occurs when the world or risk environment
changes. The prior commitment may have been reasonable at the earlier
time, but the environment has shifted. A market regime may change,
a flood defence may deteriorate, a banking network may acquire new
exposures, or a regulatory rule may alter the relevant threshold.

Write revision as 
\[
B_{t}*e,
\]
where $e$ is new evidence. Write update as 
\[
B_{t}\diamond c,
\]
where $c$ records a change in the environment. For each $p\in\Risk_{0}(S)$,
the meta-level layer records 
\[
B_{t}p,\qquad\Diamond_{B,t}p,\qquad\overline{B}_{t}p,\qquad H_{B,t}(p),
\]
\[
D_{B,t}(p):=p\wedge\neg B_{t}p,
\]
and 
\[
I_{B,t}(p):=B_{t}p\wedge B_{t}\neg p.
\]

A belief-management rule has the form 
\[
U_{B}:\bigl(B_{t}p,\Diamond_{B,t}p,\overline{B}_{t}p,H_{B,t}(p),D_{B,t}(p),I_{B,t}(p),E_{t},c_{t}\bigr)\mapsto B_{t+1}p.
\]
It may be decomposed as 
\[
U_{B}=(R_{B},\mathcal{C}_{B}),
\]
where $R_{B}$ is a revision rule and $\mathcal{C}_{B}$ is an update
rule.

For example: 
\[
I_{B,t}(p)\ge\iota\Rightarrow\text{revise the working-commitment;}
\]
\[
H_{B,t}(p)\ge\eta\Rightarrow\text{require review;}
\]
\[
\Diamond_{B,t}p\ge\beta\text{ and }B_{t}p<\alpha\Rightarrow\text{adopt a precautionary commitment.}
\]
This gives a more precise interpretation of $B$. A $B$-claim is
not a permanent belief. It is an institutional commitment embedded
in revision or update.

Dynamic epistemic logic provides a related perspective. Public announcement
logic and action-model approaches study how epistemic states change
when information is announced or when epistemic actions occur \cite{Plaza1989Public,Gerbrandy1999Bisimulations,BaltagMossSolecki1998Logic,vanDitmarschHoekKooi2008Dynamic,vanBenthem2007Dynamic}.
Organisational learning theory emphasises mechanisms for retaining,
revising, and acting on prior information \cite{ArgyrisSchon1978Organizational,LevittMarch1988Organizational}.
Work on organisational accidents and high-reliability organisations
similarly emphasises sensitivity to weak signals, near misses, latent
conditions, and unresolved vulnerabilities \cite{Reason1997Managing,WeickSutcliffe2007Managing}.

\subsection{Fragmented institutions and multiple operators}

So far, $K$ and $B$ have been treated as single institutional operators.
Real organisations are more fragmented. Different units may use different
data, models, thresholds, incentives, and responsibilities.

Introduce belief-like operators $B_{1},\ldots,B_{n},$ where $B_{i}$
represents the working commitment of unit $i$. Introduce knowledge-like
operators $K_{1},\ldots,K_{m},$ where $K_{j}$ represents assurance
under standard $j$.

For example: 
\[
B_{\mathrm{model}}p:\quad\text{the model owner is committed to }p;
\]
\[
B_{\mathrm{validation}}p:\quad\text{validation is committed to }p;
\]
\[
B_{\mathrm{business}}p:\quad\text{the business line is committed to }p;
\]
\[
K_{\mathrm{audit}}p:\quad\text{audit endorses }p;
\]
\[
K_{\mathrm{reg}}p:\quad\text{the claim has regulatory-grade certification.}
\]

Fragmentation generates cross-operator diagnostics. For example, 
\[
B_{i}p\wedge\neg B_{j}p
\]
records disagreement between units. One unit acts as if $p$, while
another does not.

The diagnostic 
\[
B_{i}K_{j}p\wedge\neg K_{j}p
\]
records mistaken reliance. Unit $i$ acts as if unit $j$ has certified
$p$, but unit $j$ has not.

The diagnostic $B_{i}B_{j}p\wedge\neg B_{j}p$ records a mistaken
higher-order working commitment. Unit $i$ acts as if unit $j$ has
adopted $p$, while unit $j$ has not.

Many real failures are of this type. They concern not just whether
$p$ is known, but who knows $p$, who believes $p$, who owns $p$,
who has ruled out $p$, and who is responsible for escalating $p$.
This connects the architecture to the literature on common knowledge,
distributed information, and interactive epistemology \cite{Aumann1976Agreeing,Geanakoplos1994Common,FaginHalpernMosesVardi1995Reasoning,BattigalliBonanno1999Recent}.

In that literature, agents may interpret signals differently, hold
different expectations, or disagree about fundamentals, and this heterogeneity
can affect prices, trading, bubbles, and risk allocation. Harrison
and Kreps show how heterogeneous expectations can generate speculative
value in asset markets, while Scheinkman and Xiong show how overconfidence
and disagreement can contribute to speculative bubbles \cite{HarrisonKreps1978Speculative,ScheinkmanXiong2003Overconfidence}.

\section{Conclusion}

\label{sec:conclusion}

This paper has argued that quantitative risk management is not only
a matter of identifying, modelling, and measuring adverse states of
the world. It is also a matter of governing the institutional standing
of risk claims. A risk claim may be true without being recognised,
probable without being robust, live without being endorsed, or action-guiding
without being assurance-grade. These distinctions are not captured
by probability alone.

The modal framework developed in the paper represents this second-order
layer explicitly. The formula $Kp$ denotes assurance-grade endorsement
of a risk claim $p$, suitable for certification, regulatory reporting,
board reliance, or audit use. The formula $Bp$ denotes working commitment:
a disciplined action-guiding stance available under incomplete assurance.
The operators $\Diamond_{M}p$ and $\overline{M}p:=\neg M\neg p$
distinguish live reachability from dual non-exclusion. In crisp Boolean
settings these notions coincide; in fuzzy settings they may come apart,
which is why the paper keeps them conceptually separate.

The central diagnostic formulas are 
\[
p\wedge\neg Kp\qquad\text{and}\qquad p\wedge\neg Bp.
\]
They identify cases in which a risk is present but lacks the relevant
institutional stance. The first is an assurance gap: the risk obtains,
but the institution is not entitled to certify or rely on it as known.
The second is an action gap: the risk obtains, but it has not entered
the working-commitment register. More generally, diagnostics such
as 
\[
\Diamond_{M}p\wedge\neg Mp,\qquad\overline{M}p\wedge\neg Mp,\qquad p\wedge M\neg p,
\]
record live-but-unendorsed risk, non-excluded-but-unsupported risk,
and outright institutional conflict.

The main formal lesson is that these diagnostics cannot be treated
as ordinary object-level targets of the same stance whose absence
they record. If $p\wedge\neg Kp$ is forced back into the $K$-register,
the framework generates Moorean and Fitch-style collapse pressure:
knowing the diagnostic pushes toward knowing $p$, undermining the
$\neg Kp$ component. Similarly, if $p\wedge\neg Bp$ is forced back
into the $B$-register, the working-commitment system becomes unstable
under positive introspection and coherence constraints.

The crisp and fuzzy analyses show the same limitation in different
forms. In crisp settings, unrestricted reachability principles push
the framework toward the implausible conclusion that every real risk
is already known, believed, or institutionally endorsed. In fuzzy
settings, the pressure is localised rather than eliminated. Under
factive knowledge-like standards, true-but-unendorsed risk is bounded
by reachable structural uncertainty. Under non-factive belief-like
standards, true-but-uncommitted risk is bounded by reachable epistemic
inconsistency. Grading the semantics therefore gives a more nuanced
picture, but it does not remove the need for architectural separation.

The proposed response is to type the governance architecture. Object-level
risk claims belong to $\Risk_{0}(S)$. Meta-level epistemic diagnostics
belong to $\Risk_{1}(S)$. Object-level claims may be governed by
endorsement, commitment, reachability, validation, and control. Meta-level
diagnostics should instead be made audit-visible through a distinct
audit operator $A$. Thus the appropriate response to a diagnostic
such as $p\wedge\neg Mp$ is not generally 
\[
M(p\wedge\neg Mp),
\]
but rather 
\[
A(p\wedge\neg Mp).
\]
The point is not to certify the diagnostic as an ordinary object-level
risk claim. The point is to record, own, review, escalate, revise,
or close the epistemic gap through a separate governance layer.

This typed architecture preserves the two motivations that initially
appeared to be in tension. It preserves the Risk Management Principle:
real risks without the relevant stance are themselves risk-relevant.
But it also avoids collapse by refusing to treat meta-level diagnostics
as ordinary targets of the very endorsement or commitment register
whose absence they diagnose. Object-level risks require endorsement
or disciplined commitment; meta-level epistemic gaps require audit
and control.

The examples illustrate the practical scope of the framework. In model
risk, $Kp$ separates a model output from assurance-grade robustness
under admissible model variation. In banking-network contagion, $Bp$
explains why a supervisor may act on a stress scenario before the
adverse state is actual. In flood governance, the regions $Kp$, $p\wedge\neg Kp$,
$\Diamond_{K}p\setminus p$, and $\Diamond_{K}p\wedge\neg Kp$ give
a geometric interpretation of robustness, fragility, live possibility,
and hesitation.

The comparison with probability is therefore not a rejection of probabilistic
QRM. Probability remains indispensable for modelling likelihood, tail
loss, exposure, severity, capital impact, and expected loss. The claim
is narrower and structural: probability does not by itself encode
the evidential geometry of endorsement, non-exclusion, robustness,
hesitation, or institutional standing. A risk may be likely but fragile,
unlikely but structurally robust, live but unsupported, or excluded
under one institutional standard while remaining live under another.
Modal status and probability should therefore be used together.

The general conclusion is that epistemic limitation is part of the
risk landscape. Risk governance must govern not only adverse states,
but also the institutional processes by which adverse states become
visible, credible, actionable, certified, challenged, revised, or
closed. The space between risk and endorsement is not empty. It contains
the unknown, the unendorsed, the weakly supported, the non-excluded,
the hesitant, the inconsistent, the fragmented, and the prematurely
closed. That is precisely the space in which meta-level controls matter. 

\bibliographystyle{abbrv}
\bibliography{risk_epistemology_refs}

\section{Proofs for the Moorean Limitation}

\label{app:proofs}

\subsection{Proof of Theorem~\ref{thm:factive-pressure}}

Suppose $q\le p$. By monotonicity of $M$, 
\[
Mq\le Mp.
\]
By factivity, 
\[
Mq\le q.
\]
Since $\wedge$ is a meet, 
\[
Mq\le Mp\wedge q.
\]
By monotonicity of $\Diamond_{M}$, 
\[
\Diamond_{M}Mq\le\Diamond_{M}(Mp\wedge q).
\]

Let $p\in\Risk(S)$. By RMP, 
\[
\mathcal{R}(p)\in\Risk(S).
\]
By RRP, 
\[
\mathcal{R}(p)\le\Diamond_{M}M\mathcal{R}(p).
\]
Since $\mathcal{R}$ is factive as a refinement, 
\[
\mathcal{R}(p)\le p.
\]
Apply the preceding step with $q=\mathcal{R}(p)$. Then 
\[
\Diamond_{M}M\mathcal{R}(p)\le\Diamond_{M}\bigl(Mp\wedge\mathcal{R}(p)\bigr).
\]
Hence 
\[
\mathcal{R}(p)\le\Diamond_{M}\bigl(Mp\wedge\mathcal{R}(p)\bigr).
\]

\subsection{Proof of Corollary~\ref{cor:factive-moore}}

Let 
\[
\mathcal{R}_{M}^{\mathrm{Moore}}(p):=p\wedge\neg Mp.
\]
Then 
\[
p\wedge\neg Mp\le p.
\]
By Theorem~\ref{thm:factive-pressure}, 
\[
p\wedge\neg Mp\le\Diamond_{M}\bigl(Mp\wedge(p\wedge\neg Mp)\bigr).
\]
Since 
\[
Mp\wedge(p\wedge\neg Mp)\le Mp\wedge\neg Mp,
\]
monotonicity gives 
\[
p\wedge\neg Mp\le\Diamond_{M}(Mp\wedge\neg Mp).
\]
Since 
\[
Mp\wedge\neg Mp\le U,
\]
we obtain 
\[
p\wedge\neg Mp\le\Diamond_{M}(Mp\wedge\neg Mp)\le\Diamond_{M}U.
\]

\subsection{Proof of Corollary~\ref{cor:factive-collapse}}

By Corollary~\ref{cor:factive-moore}, 
\[
p\wedge\neg Mp\le\Diamond_{M}U.
\]
If $U=0$, then 
\[
p\wedge\neg Mp\le\Diamond_{M}0.
\]
By bottom preservation, 
\[
\Diamond_{M}0=0.
\]
Thus 
\[
p\wedge\neg Mp\le0.
\]
By conjunction separation, 
\[
p\le Mp.
\]
By factivity, 
\[
Mp\le p.
\]
Therefore 
\[
p=Mp.
\]

\subsection{Proof of Corollary~\ref{cor:factive-conflict}}

Let 
\[
\mathcal{R}_{M}^{\mathrm{anti}}(p):=p\wedge M\neg p.
\]
This is factive as a refinement. By Theorem~\ref{thm:factive-pressure},
\[
p\wedge M\neg p\le\Diamond_{M}\bigl(Mp\wedge(p\wedge M\neg p)\bigr).
\]
Since 
\[
Mp\wedge(p\wedge M\neg p)\le Mp\wedge M\neg p=I_{M}(p),
\]
monotonicity gives 
\[
p\wedge M\neg p\le\Diamond_{M}I_{M}(p).
\]
Since 
\[
I_{M}(p)\le I_{M},
\]
we obtain 
\[
p\wedge M\neg p\le\Diamond_{M}I_{M}(p)\le\Diamond_{M}I_{M}.
\]

\subsection{Proof of Theorem~\ref{thm:belief-internal}}

Let 
\[
q:=p\wedge\neg Mp.
\]
Then 
\[
q\le p,
\]
and 
\[
q\le\neg Mp.
\]
By monotonicity, 
\[
Mq\le Mp,
\]
and 
\[
Mq\le M\neg Mp.
\]
By positive introspection, 
\[
Mp\le MMp.
\]
Hence 
\[
Mq\le MMp,
\]
and 
\[
Mq\le M\neg Mp.
\]
Since $\wedge$ is a meet, 
\[
Mq\le MMp\wedge M\neg Mp.
\]
By definition, 
\[
MMp\wedge M\neg Mp=I_{M}(Mp).
\]
Thus 
\[
M(p\wedge\neg Mp)\le I_{M}(Mp).
\]
Since 
\[
I_{M}(Mp)\le I_{M},
\]
we have 
\[
M(p\wedge\neg Mp)\le I_{M}(Mp)\le I_{M}.
\]

\subsection{Proof of Theorem~\ref{thm:belief-reach}}

Let 
\[
q:=p\wedge\neg Mp.
\]
By RMP, 
\[
q\in\Risk(S).
\]
By RRP, 
\[
q\le\Diamond_{M}Mq.
\]
By Theorem~\ref{thm:belief-internal}, 
\[
Mq\le I_{M}(Mp)\le I_{M}.
\]
By monotonicity of $\Diamond_{M}$, 
\[
\Diamond_{M}Mq\le\Diamond_{M}I_{M}(Mp)\le\Diamond_{M}I_{M}.
\]
Therefore 
\[
p\wedge\neg Mp\le\Diamond_{M}I_{M}(Mp)\le\Diamond_{M}I_{M}.
\]

\subsection{Proof of Corollary~\ref{cor:belief-collapse}}

By Theorem~\ref{thm:belief-reach}, 
\[
p\wedge\neg Mp\le\Diamond_{M}I_{M}.
\]
If $I_{M}=0$, then 
\[
p\wedge\neg Mp\le\Diamond_{M}0.
\]
By bottom preservation, 
\[
\Diamond_{M}0=0.
\]
Hence 
\[
p\wedge\neg Mp\le0.
\]
If conjunction separation holds, then 
\[
p\le Mp.
\]
In the crisp Boolean case, this is 
\[
p\to Mp.
\]

\end{document}